\theoremstyle{plain}
\newtheorem{theorem}{Theorem}[section]
\newtheorem{question}[theorem]{Question}
\newtheorem{proposition}[theorem]{Proposition}
\newtheorem{lemma}[theorem]{Lemma}
\newtheorem{corollary}[theorem]{Corollary}
\theoremstyle{definition}
\newtheorem{definition}[theorem]{Definition}
\theoremstyle{remark}
\newtheorem{remark}[theorem]{Remark}
\numberwithin{equation}{section}
\renewenvironment{proof}{{\it Proof.~}}{\qed}
\DeclareMathOperator{\Pic}{Pic}
\DeclareMathOperator{\Gal}{Gal}
\DeclareMathOperator{\elm}{elm}
\DeclareMathOperator{\Num}{Num}
\DeclareMathOperator{\Spec}{Spec}
\DeclareMathOperator{\Eff}{Eff}
\DeclareMathOperator{\PRS}{PRS}
\def\F{\mathbf{F}}
\def\P{\mathbf{P}}
\def\Z{\mathbf{Z}}
\def\SS{\mathbf{S}}
\def\C{\mathcal{C}}
\def\L{\mathscr{L}}
\def\M{\mathscr{M}}
\def\NN{\mathscr{N}}
\def\O{\mathcal{O}}
\def\E{\mathscr{E}}
\title{Ruled surfaces over finite fields, and some codes over them}
\author[R. Blache]{Régis Blache}
\address{LAMIA, Université des Antilles}
\email{regis.blache@univ-antilles.fr}
\author[E. Hallouin]{Emmanuel Hallouin}
\address{Institut de Math\'ematiques de Toulouse, UMR 5219}
\email{hallouin@univ-tlse2.fr}
\date{\today} 
\thanks{This work was funded by the French Agence Nationale de la Recherche through ANR BARRACUDA
(ANR-21-CE39-0009-BARRACUDA).}
\subjclass[2020]{11G25, 14J26, 14G50}
\keywords{Ruled surfaces over finite fields, Segre invariants, evaluation codes}
\begin{document}

\begin{abstract}
In the first part of this article, we consider ruled surfaces defined over a finite field; we introduce invariants for them, and describe some explicit contructions that illustrate possible behaviour of these invariants. 

In the second part, we consider evaluation codes on some such surfaces; we first estimate their parameters, then we construct asymptotically good families of such codes, and we show that their asymptotic parameters are better than the ones of the corresponding product codes. We also consider local properties of these codes.
\end{abstract}

\maketitle

% Question pour donner un  peu de liant: que peut-on dire du lien entre codes et invariants de Segre ? En particulier peut-on estimer les invariants de Segre des surfaces sur lesquelles on définit les codes de la première famille ? Cela dépend beaucoup du plus petit degré d'une fonction dont le graphe contient le point. Gfaut-il partir dans cette direction ??? 

%Attention aux u,v,w bizarres...
%
%Décodages locaux : voir Voloch Zarzar \cite[Section 3]{voza}. On décode le code entier à partir de ses restrictions à certaines courbes (qui ne se croisent pas trop...), et les descriptions qu'on donne dans la dernière section se prêtent très bien à cela. Mais cela nous emmène peut-être un peu trop loin ? Ou plutôt les propriétés qu'utilisent Voloch et Zarzar tombent dans le localement décodable actuel ??
%
%Différentes équivalences: on travaille surtout ici avec l'équivalence numérique: les critères à la Hansen ou ceux reposant sur le nombre de Segre sont numériques. Mais cela n'a rien d'étonnant: la borne usuelle pour les codes AG sur les courbes est elle aussi  numérique: elle ne dépend que du degré du diviseur dont on évalue les sections globales, pas de sa classe linéaire.
%
%\bigskip

\section*{Introduction}

The first motivation for this article was the study of evaluation codes on ruled surfaces. In the course of doing this, we realized that the study of such surfaces, defined over a finite field, was very incomplete. To the best of our knowledge, there has been no work on this specific topic, except when the base curve has genus zero; then we get the famous Hirzebruch surfaces, and there is nothing new compared to the algebraically closed case.

For this reason, the objectives of this article have integrated the first steps of a study of ruled surfaces over finite fields and their invariants. Let us give a brief historical account, in order to recast our results.

Classically (in the works of Cayley, Chasles or Cremona) an algebraic ruled surface --defined over the complex numbers-- is a reduced and irreducible surface in $\P^3$ that is the union of the lines it contains. The approach by C. Segre \cite{segr} of considering such varieties as projections of other ones in higher dimensional projective spaces lead him to a general study of these surfaces and their properties. 

Segre's works seem to have been forgotten for a long time. Actually his study was renewed almost a century later using more modern tools of algebraic geometry, and some of his results were given a new proof, without reference to the original ones \cite{maru,naga}. Along this rediscovery, the definition of ruled surfaces evolved to a new one, equivalent to that of projective bundle associated to a locally free rank two sheaf on the base curve. This definition has become the standard one and can be found in many textbooks \cite{bade,hart}. In this way the study of ruled surfaces becomes equivalent to that of certain vector bundles on curves; this adds a new perspective, that most of the recent results on the subject involve primarily \cite{lana}. 

Note that our first result will be that, using a correct geometric definition of a ruled surface over a finite field, the description as a projective bundle remains correct in our setting. In turn, the general results on these objects will tell us many important geometric invariants of our original ruled surface (its canonical divisor, Picard group, intersection product...). Note also that, although our original interest lead us to privilege the vocabulary of ruled surfaces, the results here can be seen as results on rank two locally free sheaves on a curve defined over a finite field.  

Once this is done, we continue our study of ruled surfaces by introducing elementary transforms associated to a non rational point, and showing some of their properties. Recall that, over an algebraically closed field, ruled surfaces share with rational surfaces the property that they do not have a unique minimal model. The birational maps from one minimal model to another are compositions of elementary transforms (one blows up a point then contracts the strict transform of the line passing through it). Our main result about the new elementary transforms is that the above result remains true over a finite field: they are sufficient to obtain any ruled surface over a given curve from the simplest one (the product of the curve by the projective line). 

Associated to a ruled surface (or a rank two locally free sheaf) is its \emph{Segre invariant}: it is the minimum self-intersection number of a section, which is a curve isomorphic to the base curve drawn on the ruled surface. We define a new Segre invariant, that we call \emph{arithmetic}, by considering only the sections for which the isomorphism is defined over our base field. We compare it to the original Segre invariant, that we call \emph{geometric}, and we give bounds for it, comparable to Nagata's well-known bound (already discovered by Segre). We also provide some constructions of surfaces having given couples of invariants.

These results follow easily from classical results about curves (in particular the Riemann-Roch theorem), but are somewhat sporadic: except the case when the invariants are negative, we are far from describing all possible couples occuring for ruled surfaces over a given curve. For this reason, we feel that the subject deserves a more thorough treatment. 

The second part of the paper is devoted to our original concern; we construct evaluation codes, and explore their properties. Such codes are obtained from the global sections of some effective divisor, evaluated at the rational points of the surface. They have already been considered when the base curve is the projective line, ie over Hirzebruch surfaces \cite{cane,nard,svav}. But in this case there are very few different surfaces (actually one for each value of the Segre invariant -- the arithmetic and geometric ones are equal) and one can consider an exhaustive study of the evaluation codes. 

In the general case, this is no longer possible, and we restrict our attention to the codes which have the best parameters among those we have worked out. We do not pretend to have constructed the evaluation codes from ruled surfaces having the best parameters, even if the parameters are better than the ones given in \cite[Proposition 4.2]{hans}. This result is the only one we are aware of on this subject, but there the author only uses results on ruled surfaces that are already valid in the algebraically closed case (actually he mostly uses them as a testing ground for other results in the same paper).

In order to give the parameters of the codes, we use different methods. In the general case, we use a slight amelioration of the ideas present in \cite[Section 3.2]{hans}. We first ``cover'' the rational points of the surface by some numerically equivalent curves (this could -- very naturally -- be fibers, but we also work with images of sections). Then, in order to evaluate the number of rational points which are zeroes a given global section, we count the number of covering curves it contains and use intersection theory to bound the number of remaining rational points on the covering curves not contained in it. Finally, we consider a particular case, the \emph{unisecant} codes, for which all the possible decompositions in irreducible components of the global sections are easy to describe, as long as their numbers of rational points; this improves on the count based on the intersection number. 

Note a funny consequence of our constructions: when we consider particular unisecant codes, and use a bound from coding theory, we are able to give an upper bound on the arithmetic Segre invariant of a ruled surface in terms of the number of rational points of the base curve. This is a new hint that the link between the possible invariants of the ruled surfaces and the arithmetic properties of the base curve should be studied further.

We also construct asymptotically good sequences of codes. This gives an example of such codes coming from surfaces. This question has been raised in \cite{clp}; but note that our point of view is different, since we do not really construct towers of surfaces, rather ruled surfaces whose bases run over a tower of curves. Our main result here is that, starting with an asymptotically optimal sequence of base curves, it is possible to construct asymptotically good sequences of codes with better parameters that the ones obtained on the products of the base curves with the projective line.

Finally, the last two decades have seen a growing interest on the so-called \emph{local} properties of codes. We end the article with such considerations: for a given codeword, we show that its restrictions to the coordinates corresponding to the rational points of a fiber (\emph{resp.} a section) form a word of a Reed-Solomon code (\emph{resp.} of an AG code on the base curve). Then we show that evaluation codes on a ruled surface are, under a mild condition, locally recoverable, and we give results on their locality and availaibility.

\section*{Notations}

In the following, we consider a finite field $k:=\F_q$. We denote by $\overline{k}$ its algebraic closure, and by $\Gamma := \Gal(\overline{k}/k)$ its absolute Galois group, with generator $\tau : x\mapsto x^q$.

We denote by $C$ a projective nonsingular curve over $k$, of positive genus, by $K$ its function field, and by $N:=\#C(k)$ its number of rational points.

For $X$ a scheme over $k$, $\ell/k$ a field extension, we denote the extension of scalars by $X_\ell := X\times_{\Spec k} \Spec \ell$. Similarly, for $\varphi:X\rightarrow Y$ a morphism defined over $k$, we denote its extension of scalars by $\varphi_\ell : X_\ell \rightarrow Y_\ell$.

We assume familiarity with the construction of projective bundles from locally free sheaves as in \cite[(4.1.1)]{ega2}, \cite[(II.7)]{hart} and with the principal results on these objects (universal property, Picard group, canonical divisor, cohomology of invertible sheaves...).

% Utiliser le vocabulaire de la stabilité des faisceaux ?!?!?!

\section{Ruled surfaces and projective bundles; first properties}

As we noted in the introduction, the definition of ruled surfaces has evolved over time; the one we give here is close to the modern definition over an the algebraically closed field \cite[(11.13)]{bade}, \cite[(V.2)]{hart}.

\begin{definition}
A nonsingular projective surface $X$ defined over $k$ is \emph{ruled over $C$} when there is a smooth morphism $\pi:X\rightarrow C$ defined over $k$ whose generic fiber is a smooth curve of genus $0$.

We call $C$ the \emph{base curve} of $X$.
\end{definition}

\begin{remark}
Such a surface is minimal, and even absolutely minimal (minimal over $\overline{k}$). Actually if $D$ is a rational curve on $X_{\overline{k}}$, then its image by $\pi_{\overline{k}}$ cannot be the whole $C_{\overline{k}}$ since it has positive genus, thus it is a point, and $D$ is contained in a fiber of $\pi_{\overline{k}}$. Since $\pi_{\overline{k}}$ is smooth, it must be the whole fiber. But the self-intersection of the fibers is $0$, and $D$ cannot be a $(-1)$-curve.
\end{remark}

A classical result \cite[(11.11)]{bade}, \cite[(V.2.2)]{hart}, \cite[\S 13 Theorem 4]{issh} tells us that the surface $X_{\overline{k}}$ is a projective bundle over $C_{\overline{k}}$: there exists some locally free sheaf of rank $2$ on $C_{\overline{k}}$, $\E$, such that we have an isomorphism $X_{\overline{k}}\simeq \P(\E)$ over $C_{\overline{k}}$. 

Let us recall briefly the arguments, since we shall use some of them below to prove that the same property holds over finite fields (see any of the references above, and in particular the proof of \cite[(11.10)]{bade} for more details). 

The generic fiber $X_{K\otimes \overline{k}}$ of $X_{\overline{k}}$ is a smooth curve of genus $0$ over the field $K\otimes \overline{k}$. We can use its anticanonical sheaf to embed it as a non singular conic in $\P^2_{K\otimes \overline{k}}$. Since $K\otimes \overline{k}$ is the function field of a curve over an algebraically closed field, it has property $C_1$ from Tsen's theorem, and the above conic has a rational point.

This rational point $\Spec K\otimes \overline{k}\rightarrow X_{K\otimes \overline{k}}$ gives us a rational map from $C_{\overline{k}}$ to $X_{\overline{k}}$, that is a section of $\pi_{\overline{k}}$. Since $C_{\overline{k}}$ is nonsingular and $X_{\overline{k}}$ projective, this map extends to a morphism. If we denote by $S$ its image, then the sheaf $\E:=\pi_{\overline{k}\ast}\O_{X_{\overline{k}}}(S)$ is a locally free sheaf of rank $2$ such that we have an isomorphism $X_{\overline{k}}\simeq \P(\E)$ over $C_{\overline{k}}$.

We shall extend this result to our situation. Almost all arguments above remain true, except Tsen's theorem. To show that the generic fiber $X_K$ has a rational point $\Spec K\rightarrow X_K$, we will instead apply the Hasse-Minkowski theorem to the quadratic form $Q$ defining the anticanonical embedding of the generic fiber $X_K$ in $\P^2_K$.

Let $v$ denote a place of $K$, and $K_v$ the corresponding completion. We denote by $\O_v$ its valuation ring, and by $k(v)$ its residue field, which is a finite extension of $k$. Now $X_{K_v}$ is also defined by the quadratic form $Q$ in $\P^2_{K_v}$, and we can assume that its coefficients are in $\O_v$. The reduction modulo $v$ of this conic is the fiber of $\pi$ over $\Spec k(v)$, and since $\pi$ is smooth, it is a smooth conic in $\P^2_{k(v)}$. Since $k(v)$ is finite, it has at least one (actually $\# k(v)+1$) rational point(s). Now Hensel's lemma ensures us that any such point lifts to a point with coefficients in $\O_v$. But this is a $K_v$-rational point of $X_{k_v}$ 

Finally, the quadratic form $Q$ over $K$ is isotropic over any completion $K_v$ of $K$, thus it is isotropic over $K$ by the Hasse-Minkowski theorem on quadratic forms over global fields. Reasoning as above, we deduce that $X_K$ has a $K$-rational point.

We have shown the first assertion of the following result.

\begin{theorem}
\label{theo1}
A ruled surface $\pi : X\rightarrow C$ over $k$ has the following properties

\begin{itemize}
\item[(1)] $\pi$ admits a section $\sigma$ which is defined over $k$; 
\end{itemize}
we fix such a section and denote its image by $S$;
\begin{itemize}
\item[(2)] $X$ is a projective bundle over $C$: there exists a rank $2$ locally free sheaf $\E$ on $C$ such that we have an isomorphism $X\simeq \P(\E)$ over $C$;
\item[(3)] let $f$ denote the class (for numerical equivalence) of a fiber of $\pi$. We have the following description of the divisor class groups of $X$
\[
\Pic X\simeq \Z S+\pi^\ast \Pic C,~\Num X \simeq \Z S+\Z f
\]
with intersection product satisfying $S^2=\deg\E$, $f^2=0$, $S\cdot f =1$;
\item[(4)] we have $\P(\E)\simeq \P(\E')$ over $C$ if, and only if there exists some $\L\in\Pic C$ such that $\E'\simeq \E\otimes \L$;
\item[(5)] the zeta function of $X$ is $Z(X,t)=Z(C,t)Z(C,qt)$;
\item[(6)] the linear equivalence class of the canonical divisor of $X$ is
\[
K_X\sim -2S+\pi^\ast (K_C+\det\E)
\] 
\end{itemize}

\end{theorem}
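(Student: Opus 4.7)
Part (1) has already been established via Hasse-Minkowski in the discussion preceding the theorem, so the plan focuses on (2)--(6), all of which reduce to classical projective bundle theory after constructing the relevant sheaf on $C$ itself. For (2), I would set $\E := \pi_\ast \O_X(S)$. On every geometric fiber of $\pi$ the sheaf $\O_X(S)$ restricts to $\O_{\P^1}(1)$, so cohomology and base change give $R^1\pi_\ast \O_X(S)=0$ and show that $\E$ is locally free of rank $2$ whose formation commutes with the scalar extension $-\otimes_k \overline{k}$. The natural map $\pi^\ast \E \to \O_X(S)$ is surjective (checked on geometric fibers) and induces a $C$-morphism $X \to \P(\E)$ which becomes an isomorphism after base change to $\overline{k}$ by the classical result recalled earlier, hence is an isomorphism by faithfully flat descent.

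For (3), I would invoke the standard Picard group formula for a projective bundle, $\Pic \P(\E) \simeq \Z\cdot \O_{\P(\E)}(1) \oplus \pi^\ast \Pic C$, noting that by construction $\O_{\P(\E)}(1) = \O_X(S)$. The description of $\Num X$ follows because $\pi^\ast \Pic^0 C$ lies in the kernel of $\Pic X \to \Num X$ (its elements are algebraically equivalent to zero). For the intersection numbers, $S \cdot f = 1$ is the section property, $f^2 = 0$ follows from moving fibers, and $S^2 = \deg\E$ results from pushing forward the short exact sequence $0 \to \O_X \to \O_X(S) \to \O_S(S) \to 0$, which yields
\[
0 \to \O_C \to \E \to \sigma^\ast \O_X(S) \to 0,
\]
so that $\det\E = \sigma^\ast \O_X(S)$ and therefore $\deg\E = \deg\sigma^\ast \O_X(S) = S^2$.

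Part (4) is the standard criterion for isomorphism of projective bundles, which is valid over any base and can be cited from \cite{hart}. For (5), each fiber over a point of $C(\F_{q^n})$ is a smooth plane conic with a rational point coming from $\sigma$, hence isomorphic to $\P^1_{\F_{q^n}}$ and contributing $1+q^n$ rational points; summing gives $\#X(\F_{q^n}) = (1+q^n)\,\#C(\F_{q^n})$, and taking logarithmic derivatives yields $Z(X,t) = Z(C,t)Z(C,qt)$. For (6), I would combine the projective bundle relative canonical formula $\omega_{\P(\E)/C} \simeq \O_{\P(\E)}(-2) \otimes \pi^\ast \det \E$ with $\omega_X \simeq \omega_{X/C}\otimes \pi^\ast \omega_C$, and rewrite in divisorial language using the identifications from (2)--(3). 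The only substantive obstacle is (2): once $\E$ has been constructed as a sheaf on $C$ rather than merely on $C_{\overline{k}}$, the remaining parts transcribe the well-known theory over an algebraically closed field essentially without modification.
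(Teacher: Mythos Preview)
Your proposal is correct, and on a couple of points it takes a cleaner route than the paper. For part (3) the paper establishes the Picard group by first quoting the formula over $\overline{k}$ and then descending via $\Pic X=(\Pic X_{\overline{k}})^\Gamma$, which uses that $k$ is finite; you instead invoke the projective-bundle Picard formula directly over $C$, which is valid over any base and avoids the Galois-invariant step. More notably, for the computation $S^2=\deg\E$ the paper appeals to asymptotic Riemann--Roch, identifying $S^2$ as twice the leading coefficient of $n\mapsto\chi(nS)=\chi(\SS^n\E)$ and reading off $\deg\E$ from the Riemann--Roch formula for $\SS^n\E$. Your argument via the pushforward of $0\to\O_X\to\O_X(S)\to\O_S(S)\to 0$, yielding $\det\E\simeq\sigma^\ast\O_X(S)$ and hence $\deg\E=S^2$, is shorter and more conceptual, and it incidentally exhibits the extension in Corollary~\ref{extension} at the same time. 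Parts (2), (4), (5), (6) are handled essentially as in the paper, with your treatment of (2) making the descent step (cohomology and base change plus faithfully flat descent) more explicit than the paper's remark that the closed-field argument ``remains valid''.
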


\begin{proof}
The proofs of the points (2) and (4) over an algebraically closed field remain valid (see \cite[(11.10) and (11.11)]{bade}). 

To prove the third one, remark that we have the following description of the geometric Picard group of a projective bundle
\[
\Pic \P(\E)_{\overline{k}} \simeq \Z \O_{\P(\E)}(1)+\pi^\ast \Pic C_{\overline{k}}
\]
Now from our construction of $\E\simeq \pi_\ast \O_X(S)$ we have $\O_X(S)\simeq \O_{\P(\E)}(1)$, and the isomorphism $\Pic X_{\overline{k}} \simeq \Z S+\pi^\ast \Pic C_{\overline{k}}$. As $k$ is finite and $X$ is smooth, we have $\Pic X=(\Pic X_{\overline{k}})^\Gamma$. Since $S$ is defined over $k$ the description of the Picard group follows from the equality $\Pic C=(\Pic C_{\overline{k}})^\Gamma$. 

The assertions on the intersection products $f^2$ and $f\cdot S$ follow from the facts that $f$ is a fiber of $\pi$ and $S$ the image of a section. To compute $S^2$, we use the asymptotic Riemann-Roch theorem \cite[(1.1.24)]{laza}: this self intersection number is twice the leading coefficient of the polynomial $n\mapsto \chi(nS)$. Since we have $S\cdot f=1$, we deduce from \cite[(V.2.4)]{hart} that we have $\chi(nS)=\chi(\pi_\ast \O_{\P(\E)}(n))$ (this last one is over $C$), and from \cite[(2.1.15)]{ega2} that  $\chi(\pi_\ast \O_{\P(\E)}(n))=\chi(\SS^n\E)$. This last sheaf is the $n$th symmetric product of $\E$; it has rank $n+1$, and degree $n(n+1)\deg \E/2$, thus we have $\chi(nS)=n(n+1)\deg \E/2+(n+1)(1-g)$, and $S^2=\deg \E$ as desired.

In order to show the fifth assertion, we first determine the cardinality of $X(\ell)$ for a finite extension $\ell/k$. If $x\in X(\ell)$ is such a point, then $\pi\circ x$ is a point in $C(\ell)$. Conversely, if we fix $p\in C(\ell)$, then any point $x\in C(\ell)$ with $\pi \circ x=p$ factors as a point in $\pi^{-1}(p)(\ell)$. Since this fiber is isomorphic to $\P^1_\ell$, we deduce that there are $\# \ell+1$ rational points in the fiber over $p$, and that we have $\# X(\ell)=(\#\ell +1)\# C(\ell)$. The formula for the zeta function follows by plugging these equalities in its exponential formula.

The assertion about the canonical divisor is a particular case of a general assertion about projective bundles. It can also be deduced from \cite[(V.2.10)]{hart}: there the author uses a \emph{normalized} sheaf $\E$, but replacing $\E$ by $\E'=\E\otimes\L$ gives $\O_{\P(\E')}(1)\simeq \O_{\P(\E)}(1)\otimes \pi^\ast \L$ and $\det\E'=(\det\E)\otimes \L^{\otimes 2}$. We see that the formula remains true for any rank two sheaf.
\end{proof}

\begin{corollary}
\label{extension}
Every locally free sheaf $\E$ of rank two on $C$ is an extension of invertible sheaves: there exists two elements $\M,\NN\in \Pic(C)$ such that the sequence
\[
0\rightarrow \M \rightarrow \E \rightarrow \NN \rightarrow 0
\]
is exact.
\end{corollary}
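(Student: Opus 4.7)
The plan is to reduce the statement to Theorem \ref{theo1}(1) by passing from the sheaf $\E$ to its associated ruled surface $\pi \colon \P(\E) \to C$ and translating a section into a quotient line bundle.

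First, I would form the projective bundle $X := \P(\E)$ with its structure morphism $\pi \colon X \to C$. This is a ruled surface over $C$ in the sense of the definition, so Theorem \ref{theo1}(1) applies and provides a section $\sigma \colon C \to X$ defined over $k$.

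Second, I would invoke the universal property of $\P(\E)$: giving a $C$-morphism $\sigma \colon C \to \P(\E)$ is equivalent to giving an invertible quotient $\E \twoheadrightarrow \NN$, where $\NN := \sigma^\ast \O_{\P(\E)}(1)$. (This is the classical description from \cite[(II.7.12)]{hart} with the Grothendieck convention used in the paper.) Letting $\M$ be the kernel of this surjection yields the exact sequence
\[
0 \rightarrow \M \rightarrow \E \rightarrow \NN \rightarrow 0.
\]

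Third, I would check that $\M$ is invertible. Since $\E$ has rank $2$ and $\NN$ has rank $1$, the kernel $\M$ is a coherent $\O_C$-module of generic rank $1$. It is a subsheaf of the locally free sheaf $\E$, hence torsion-free; on a smooth curve a torsion-free coherent sheaf is locally free, so $\M$ is an invertible sheaf as required. This gives both $\M$ and $\NN$ in $\Pic(C)$ and concludes the proof.

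The only nontrivial ingredient is the existence of a $k$-rational section, which is precisely Theorem \ref{theo1}(1) (ultimately Hasse--Minkowski applied to the anticanonical conic of the generic fiber); everything else is the standard sections/quotients dictionary for projective bundles plus the fact that torsion-free implies locally free on a smooth curve.
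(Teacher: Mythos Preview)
Your proof is correct and follows essentially the same approach as the paper: both pass to $X=\P(\E)$, invoke Theorem~\ref{theo1}(1) to obtain a $k$-rational section, and then use the universal property of projective bundles to produce the quotient $\NN=\sigma^\ast\O_X(1)$. The only cosmetic difference is that the paper identifies the kernel explicitly as $\pi_\ast(\O_X(1)\otimes\O_X(-S))$, whereas you argue it is invertible via the general fact that torsion-free rank-one sheaves on a smooth curve are locally free; either justification is fine.
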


\begin{proof}
We apply the first assertion of the result above to the ruled surface $X=\P(\E)$ and we reason as in \cite[(V.2.6)]{hart}. Let $\sigma:C\rightarrow X$ be a section of $\pi$, with image $S$. The universal property of projective bundles \cite[(II.7.12)]{hart}, applied to this section, gives us a surjection $\E\rightarrow \sigma^\ast\O_{X}(1)$; its kernel is the invertible sheaf $\pi_{\ast}(\O_{X}(1)\otimes \O_{X}(-S))$.
\end{proof}

\section{Elementary transforms}

In this section, we generalize the notion of elementary transform to the non closed case. We will show in the next section that, as in the closed case, any ruled surface $X$ over $C$ can be obtained from the product $C\times \P^1$ by a composition of such maps. In applications, this will turn out to be a convenient way to construct new ruled surfaces from a given one. 

Let $X$ be a ruled surface over $C$, and $x$ a point of degree $d$ on $X$ (a Galois orbit of cardinality $d$ in $X(\overline{k})$), whose image $\pi(x)$ is a point of degree $d$ on $C$.

Let $\widetilde{X}$ denote the surface obtained by blowing up the point $x$ on $X$. 

We denote by $x_1,x_2,\ldots,x_d$ the points above $x$ in $X(\F_{q^d})$. Then $\widetilde{X}_{\overline{k}}$ is the surface obtained from  $X_{\overline{k}}$ by blowing up these $d$ points. It contains the (pairwise disjoint) exceptional divisors $E_1,\ldots,E_d$, which are $(-1)$-curves. 

Let $f_1,\ldots,f_d$ denote respectively the fibers of $\pi_{\overline{k}}$ in $X_{\overline{k}}$ over the pairwise distinct points $\pi_{\overline{k}}(x_1),\ldots,\pi_{\overline{k}}(x_d)$. Their strict transforms $\widetilde{f}_1,\ldots,\widetilde{f}_d$ are rational curves of self intersection $-1$, and they are also (pairwise disjoint) $(-1)$-curves. Moreover they are conjugate under the action of $\Gamma$.

As a consequence \cite[Theorem 3.2]{hass}, we can contract the divisor $E$ on $X$ corresponding to the union of these curves, and we obtain a new surface $\elm_x(X)$, where $\elm_x$ is the birational map over $k$ defined as the composition of the blowup at $x$ and the contraction of the fibers.

\begin{definition}
The birational map $\elm_x$ is the \emph{elementary transform on $X$ with center $x$}.
\end{definition}

The morphism $\pi : X\rightarrow C$ extends to a morphism $\widetilde{X}\rightarrow C$. Moreover, since the divisor we contract is contained in the fiber $\pi^{-1}(\pi(x))$, this morphism factors through $\pi_x : \elm_x(X)\rightarrow C$. The fibers of $\pi_x$ (including the generic one) are the same as that of $\pi$, except for the fiber over $x$, which is the image of $E$. Since this is a smooth curve of genus $0$, we get the first assertion of the following result.

\begin{proposition}
\label{elemtransfoppties}
Let $\pi:X \rightarrow C$ denote a ruled surface, and $x$ a point of degree $d$ on $X$, whose image $\pi(x)$ is a point of degree $d$ on $C$.

The image $\elm_x(X)$ of $X$ by the elementary transform with center $x$ has the following properties
\begin{itemize}
\item[(1)] it is a ruled surface over $C$; 
\item[(2)] if $S$ is the image of a section of $\pi$, and $E$ is the exceptional divisor of the blow up of $X$ at $x$, then we can identify the Picard group of $\elm_x(X)$ with the subgroup $\Z(S-E)+\pi^\ast \Pic C $ of $\Pic \widetilde{X}=\Pic X+\Z E$. The intersection product extends that of $\Pic X$ by $E^2=-d$ and $D\cdot E =0$ for any $D\in \Pic X$.
\item[(3)] if $D\sim aS+\pi^\ast \delta$, $\delta\in \Pic C$, is a curve on $X$, defined over $k$, then the class of the image of its strict transform in $\elm_x(X)$ is 
\[
\elm_x(D)\sim a(S-E)+\pi^\ast\left( \delta+(a-m_x(D))\pi(x)\right)
\] 
under the identification above, where $m_x(D)$ is the multiplicity of $D$ at $x$. In particular, the self intersection numbers satisfy
\[
\elm_x(D)^2=D^2+ad(a-2m_x(D))
\]
\item[(4)] the canonical divisor on $\elm_x(X)$ is linearly equivalent to $K_X+2E-\pi^\ast \pi(x)$.
\end{itemize}
\end{proposition}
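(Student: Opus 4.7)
The plan is to realize $\widetilde{X}$ simultaneously as a blowup of $X$ (at $x$, with exceptional divisor $E = \sum E_i$) and as a blowup of $\elm_x(X)$ at some closed point $y$ of degree $d$ (with exceptional divisor $\widetilde{f} = \sum \widetilde{f}_i$), then exploit this symmetry to transfer geometric data between $X$ and $\elm_x(X)$. The first ingredient, already outlined in the paragraph preceding the statement, is that $\pi$ extends across $\widetilde{X}$ and factors through $\elm_x(X)$ because the contracted divisor lies in the fiber over $\pi(x)$; away from $\pi(x)$ the fibers of $\pi_x$ agree with those of $\pi$, while the fiber over $\pi(x)$ is the image of $E$. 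Since each $E_i$ meets only $\widetilde{f}_i$ among the contracted curves, and meets it transversally in one point, the contraction maps $E_i$ isomorphically onto a $\P^1$; thus $\pi_x$ is smooth with $\P^1$-fibers, and $\elm_x(X)$ is ruled over $C$, proving (1).

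For (2), I would use the blowup description $p : \widetilde{X}\to \elm_x(X)$. Over $\overline{k}$ the blowup formula gives
\[
\Pic \widetilde{X}_{\overline{k}} = p^* \Pic \elm_x(X)_{\overline{k}} \oplus \bigoplus_i \Z \widetilde{f}_i,
\]
so $\Pic \elm_x(X)_{\overline{k}}$ identifies with the orthogonal complement of the $\widetilde{f}_i$'s in $\Pic \widetilde{X}_{\overline{k}}$. Writing $\widetilde{f}_i = f_i - E_i$ and computing $(aS + \pi^*\delta + b E)\cdot\widetilde{f}_i = a+b$ via the projection formula pins down this subgroup as $\Z(S-E) + \pi^* \Pic C_{\overline{k}}$. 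Taking $\Gamma$-invariants (here the whole computation is already $\Gamma$-equivariant) yields the stated description over $k$. The intersection numbers $E^2 = \sum E_i^2 = -d$ and $D\cdot E = 0$ for $D\in\Pic X$ follow from the projection formula applied to the blowup $\pi_1$.

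For (3), the total transform on $\widetilde{X}$ has class $aS + \pi^*\delta$ and the strict transform is $aS + \pi^*\delta - m_x(D) E$, the multiplicity being common to all $x_i$ by $\Gamma$-symmetry. Using the relation $\widetilde{f} = \sum(f_i - E_i) = \pi^*\pi(x) - E$ I rewrite
\[
aS + \pi^*\delta - m_x(D)E = a(S-E) + \pi^*\bigl(\delta + (a - m_x(D))\pi(x)\bigr) - (a - m_x(D))\widetilde{f},
\]
and projecting onto the perpendicular subgroup from (2) gives the claimed class. The self-intersection formula is then a direct computation using $(S-E)^2 = \deg\E - d$, $(S-E)\cdot \pi^*\mu = \deg\mu$ and $(\pi^*\mu)^2 = 0$.

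For (4), I would equate the two canonical-class formulas attached to the two blowup descriptions of $\widetilde{X}$: $K_{\widetilde{X}} = \pi_1^* K_X + E$ and $K_{\widetilde{X}} = p^* K_{\elm_x(X)} + \widetilde{f}$. Solving for $p^* K_{\elm_x(X)}$ and substituting $\widetilde{f} = \pi^*\pi(x) - E$ gives $p^* K_{\elm_x(X)} = K_X + 2E - \pi^*\pi(x)$, as required; a sanity check using $K_X \sim -2S + \pi^*(K_C + \det\E)$ confirms the right-hand side indeed lies in $\Z(S-E) + \pi^*\Pic C$. The main obstacle I foresee is not conceptual but bookkeeping: keeping the three Picard groups and two blowup maps straight, and in particular verifying that $\widetilde{X} \to \elm_x(X)$ is genuinely the blowup of a single closed point of degree $d$, which rests on the $\Gamma$-conjugacy of the $\widetilde{f}_i$ observed in the text preceding the statement.
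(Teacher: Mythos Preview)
Your proof is correct and follows essentially the same route as the paper's: identify $\Pic \elm_x(X)_{\overline{k}}$ as the orthogonal of the contracted classes $\widetilde{f}_i$ inside $\Pic\widetilde{X}_{\overline{k}}$ and then take Galois invariants; compute the strict transform and project orthogonally; and derive the canonical class from $K_{\widetilde{X}}=K_X+E$ together with $\widetilde{f}=\pi^\ast\pi(x)-E$. The only cosmetic difference is that you phrase (4) as equating two blowup formulas for $K_{\widetilde{X}}$, which is exactly the paper's computation reorganized; one small imprecision is that in (2) your orthogonality check should be carried out over $\overline{k}$ with $\sum_j b_j E_j$ rather than $bE$ before taking invariants, but the conclusion is unchanged.
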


\begin{proof}
Only the last three points remain to be proven. They are consequences of well known properties of blowups (see for instance \cite[(V.3)]{hart}, but remark that here we perform $2d$ monoidal transforms in the sense of that book: geometrically we blowup $d$ points, and then contract $d$ $(-1)$-curves).

The pullback of divisors by the blowup $\widetilde{X}_{\overline{k}}\rightarrow X_{\overline{k}}$ gives an (isometric for the intersection product) injection between the Picard groups, and we can identify the Picard group of the surface $\widetilde{X}_{\overline{k}}$ to the group generated by its image and the exceptional divisors $\Z S+\pi^\ast \Pic C_{\overline{k}}+\Z E_1+\ldots+\Z E_d$.

Since we contract the divisors $\widetilde{f}_i\sim \pi^\ast \pi(x_i)-E_i$, $1\leq i\leq d$, the Picard group of $\elm_x(X)_{\overline{k}}$ identifies to the orthogonal of these classes in $\Pic \widetilde{X}_{\overline{k}}$. We get the subgroup $\Z(S-E_1-\cdots-E_d)+\pi^\ast \Pic C_{\overline{k}}$, and the values for the intersection product come from the isometry above. Now $\{E_1,\ldots,E_d\}$ is an orbit for the Galois action on $\Pic \widetilde{X}_{\overline{k}}$, and we deduce the second point by taking the Galois invariants.

Let $D\sim aS+\pi^\ast \delta$ denote an irreducible curve on $X_{\overline{k}}$; it is well-known that the class of its strict transform in $\Pic \widetilde{X}_{\overline{k}}$ is $aS+\pi^\ast \delta-\sum m_{x_i}(D)E_i$ where $m_{x_i}(D)$ is the multiplicity of $D$ at $x_i$. If we assume that $D$ is defined over $k$, all multiplicities are equal and we denote their common value by $m_x(D)$. Thus the class of the strict transform of $D\sim aS+\pi^\ast \delta$ in $\Pic \widetilde{X}$ is $aS+\pi^\ast \delta-m_{x}(D)E$.

The corresponding class in $\Pic \elm_x(X)$ is the image of the above class by the orthogonal projection on the orthogonal of the class $\pi^\ast\pi(x)-E$, and we get the desired result for irreducible divisors. In general, it is sufficient to apply the above result to the irreducible components of an effective divisor.

The assertion about the canonical divisor is easy: the one of $\widetilde{X}$ is $K_X+E$, then since we contract a divisor linearly equivalent to $\pi^\ast \pi(x)-E$, we get the divisor $K_X+E-(\pi^\ast \pi(x)-E)$ as canonical divisor of $\elm_x(X)$. This is the desired result.
\end{proof}

\begin{remark}
Note, as a consequence of the construction, that if $y$ is the point of degree $d$ on $\elm_x(X)$ which is the image of the strict transform of the fiber above $x$, then the elementary transform $\elm_y$ on $\elm_x(X)$ is the inverse of the birational transform $\elm_x$.

%Note also that.... parler de l'évolution du carré d'une section (bof, cela ne semble pas nécessaire)
\end{remark}

We now describe a rank $2$ locally free sheaf $\E'$ such that $\elm_x(X)=\P(\E')$, from a sheaf $\E$ such that $X=\P(\E)$. The new sheaf $\E'$ is usually called an elementary transform of $\E$, but we also find the terminology \emph{diminution \'el\'ementaire} in \cite{rayn}.

It is well known, when the base field is algebraically closed, that the new sheaf is the kernel of the surjection from $\E\rightarrow k(p)$ (the skyscraper sheaf concentrated at $p$) corresponding to the center of the elementary transform \cite[III. Exercice 2]{beau}, \cite[Section 7.4]{dolg}, \cite[Section 1]{hart3}, \cite[Proposition 1.1]{maru}.

This remains true in the non closed case, but we shall give a different description, better suited to our purposes.

\begin{proposition}
\label{elemtransfosheaf}
Notations are as above. Let $\E$ be a rank two locally free sheaf such that $X=\P(\E)$.

There exists a section $t : C\rightarrow X$  of $\pi$ such that $t(\pi(x))=x$, and if $T$ denotes its image, we have the exact sequence of sheaves 
 \[
0\rightarrow \pi_{\ast}(\O_{X}(1)\otimes \O_{X}(-T)) \rightarrow \E \rightarrow \L=t^\ast\O_{X}(1) \rightarrow 0
\]
Moreover, we have $\elm_x(X)=\P(\E')$, with $\E'=\E\times_{\L} \L(-\pi(x))$.
\end{proposition}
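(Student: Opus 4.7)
The proof splits into three steps.

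\textbf{Step 1: Existence of $t$.} Set $p:=\pi(x)$. Since $\deg x=\deg p=d$, the residue fields coincide ($k(x)=k(p)$), so $x$ is a $k(p)$-rational point of the fiber $\pi^{-1}(p)\simeq\P(\E\otimes k(p))$, corresponding to a surjection $\E\otimes k(p)\twoheadrightarrow L_x$ of $k(p)$-vector spaces. To lift this to a global invertible quotient, I would choose any line bundle $\L$ on $C$ with $\L\otimes k(p)\simeq L_x$ and $\deg\L$ large enough that $H^1(C,\E^\vee\otimes\L(-p))=0$. The evaluation map $H^0(C,\E^\vee\otimes\L)\to\Hom_{k(p)}(\E\otimes k(p),\L\otimes k(p))$ is then surjective, producing a morphism $\phi:\E\to\L$ whose fiber at $p$ is the prescribed quotient. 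If $\phi$ fails to be globally surjective, replacing $\L$ by the invertible image $\phi(\E)\subset\L$ (whose cokernel in $\L$ is supported off $p$) preserves the $p$-fiber and yields surjectivity. The universal property of $\P(\E)$ then gives the desired section $t$ with $t(p)=x$.

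\textbf{Step 2: The exact sequence.} This is immediate from the proof of Corollary~\ref{extension} applied with $\sigma=t$ and $T=t(C)$: the surjection $\E\twoheadrightarrow t^\ast\O_X(1)=\L$ produced by the universal property has kernel $\pi_\ast(\O_X(1)\otimes\O_X(-T))$.

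\textbf{Step 3: Identifying $\P(\E')$ with $\elm_x(X)$.} The fiber product $\E'=\E\times_\L\L(-p)$ is equivalently the kernel of the composite $\E\twoheadrightarrow\L\twoheadrightarrow\L|_p$, where $\L|_p$ denotes the skyscraper sheaf supported at $p$ with stalk $L_x$. Hence
\[
0\to\E'\to\E\to\L|_p\to 0
\]
is exact; since $\L|_p$ has finite length and $\E$ is locally free, $\E'$ is torsion-free of rank $2$, hence locally free on the smooth curve $C$. To identify $\P(\E')$ with $\elm_x(X)$, I pass to the algebraic closure: writing $x_1,\dots,x_d$ for the Galois conjugates of $x$ and $p_i=\pi(x_i)$, base change yields $0\to\E'_{\overline k}\to\E_{\overline k}\to\bigoplus_{i=1}^d L_{x_i}\to 0$, which exhibits $\E'_{\overline k}$ as the iterated classical elementary modification of $\E_{\overline k}$ at the points $x_1,\dots,x_d$. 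By the algebraically closed version of the result (\cite[Section 1]{hart3}, \cite[Proposition 1.1]{maru}), $\P(\E'_{\overline k})$ is the composition of the elementary transforms of $X_{\overline k}$ at the $x_i$, which is exactly $(\elm_xX)_{\overline k}$. The main obstacle is the descent of this comparison from $\overline k$ to $k$; but since $\E'$ and $\elm_x(X)$ are both defined over $k$ and every piece of data entering the comparison isomorphism is $\Gamma$-equivariant, the descent is automatic and yields $\P(\E')\simeq\elm_x(X)$ over $k$.
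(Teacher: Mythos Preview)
Your proof is correct and follows the same three-step structure as the paper's, but with a genuinely different argument in Step~1.

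For the existence of the section through $x$, the paper works on the surface: for $\L$ of large degree, the linear system $|S+\pi^\ast\L|$ has dimension growing like $2h^0(\L)$; the subspace of sections through $x$ has codimension at most $d$, while the subspace of sections containing the entire fiber $\pi^{-1}(\pi(x))$ has codimension exactly $2d$ (computed as $\chi(D)-\chi(D-\pi^\ast\pi(x))$). Hence there is a global section through $x$ not containing the fiber, and after stripping any fiber components one gets the desired irreducible $T$. Your argument stays on the curve: you lift the fiber-wise quotient $\E\otimes k(p)\twoheadrightarrow L_x$ to a global $\phi:\E\to\L$ via the vanishing of $H^1(C,\E^\vee\otimes\L(-p))$, then saturate the image. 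Both are standard; yours is perhaps cleaner in that it never leaves the sheaf language, while the paper's dimension count is more geometric and connects directly to the ambient linear systems used elsewhere.

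For Step~3, the paper is slightly more direct: it observes that the surjection $\E\to k(\pi(x))$ coming from the point $x$ factors through $\E\to\L$ precisely because $t(\pi(x))=x$, and then reads off the fiber-product description of the kernel from the snake lemma applied to $0\to\L(-\pi(x))\to\L\to k(\pi(x))\to 0$. It appeals to the same closed-field references you cite for the identification $\elm_x(X)=\P(\E')$, so the descent step you spell out is implicit there as well; you have simply made it explicit.
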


\begin{proof}
The images of the sections of $\pi$ are exactly the irreducible global sections of divisors of the form $D\sim S+\pi^\ast\L$, $\L \in \Pic C$. When $\deg\L$ is large enough, the linear system attached to such a class is large enough so that the subspace of those sections passing through $x$ (whose codimension is at most $d$) is non empty. 

Now from \cite[(2.6.1)]{ega3} the codimension of the space of those sections containing $\pi^\ast \pi(x)$ in the complete linear system above is $2d=\chi(D)-\chi(D-\pi^\ast \pi(x))$, again when $\deg\L$ is large enough. We deduce the existence of a section whose image passes through $x$ but does not contain $\pi^\ast \pi(x)$. It can contain other fibers of $\pi$, but once we have removed them, we end with an irreducible curve passing through $x$, that is the image of a section of $\pi$. The exact sequence describing $\E$ follows from the universal property of projective bundles applied to this section.

The point $x$ corresponds to a morphism $\Spec k(x)\rightarrow X$, and composing it with $\pi$ we get the point $\pi(x)$ on $C$. Applying once again the universal property of projective bundles, we get a surjection $\E\otimes k(\pi(x))\rightarrow k(\pi(x))$, and a surjection of $\O_C$-modules $\E \rightarrow k(\pi(x))$ (where this last field is the skyscraper sheaf at $\pi(x)$). We know from the references cited above that the kernel of this map is a sheaf $\E'$ such that $\elm_x(X)=\P(\E')$.

Since we have $s(\pi(x))=x$, the morphism $\Spec k(x)\rightarrow X$ factors through $\Spec k(\pi(x))\rightarrow C \rightarrow X$ where the second arrow is $s$. We deduce that the surjection $\E \rightarrow k(\pi(x))$ factors through the surjection  $\E \rightarrow \L$, and finally we deduce from the exact sequence 
\[
0\rightarrow \L(-\pi(x)) \rightarrow \L \rightarrow k(\pi(x)) \rightarrow 0
\]
the desired exact sequence
\[
0\rightarrow \E\times_{\L} \L(-\pi(x)) \rightarrow \E \rightarrow k(\pi(x)) \rightarrow 0
\]
\end{proof}

\section{Segre invariants}

Let $\E$ denote a rank $2$ locally free sheaf on $C$. The study of such objects has attracted much interest when $C$ is defined over an algebraically closed field. Our aim is to consider the generalisation of these results over finite fields. To the best of our knowledge, these questions have not been adressed in our setting, except incidentally in \cite{ball}.

The following lemma mimics \cite[Lemma 1.1]{maru}

\begin{lemma}
Let $\L\in \Pic(C)$ denote an invertible subsheaf of $\E$; then $\deg \L$ is bounded above.
\end{lemma}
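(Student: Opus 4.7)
The plan is to bound $\deg \L$ by comparing the spaces of global sections of $\L$ and $\E$. Since $H^0$ is left exact, the injection of sheaves $\L \hookrightarrow \E$ produces an injection $H^0(C, \L) \hookrightarrow H^0(C, \E)$; the target is a finite-dimensional $k$-vector space of dimension $h^0(\E)$, which depends only on the fixed sheaf $\E$ and not on $\L$. This will be the only global information used.

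I would then control $h^0(\L)$ from below by Riemann-Roch. As soon as $\deg \L > 2g - 2$, Serre duality forces $H^1(C, \L) = 0$ (since $\deg(\L^{-1} \otimes \omega_C) < 0$), so that $h^0(\L) = \deg \L + 1 - g$. Combined with $h^0(\L) \leq h^0(\E)$ this gives $\deg \L \leq h^0(\E) + g - 1$ in this range. In the complementary range one has the trivial bound $\deg \L \leq 2g - 2$, so altogether
\[
\deg \L \leq \max\{2g - 2,\; h^0(\E) + g - 1\},
\]
a finite quantity depending only on $\E$ and $C$.

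No step in this argument is sensitive to the base field: Riemann-Roch, Serre duality, and the left-exactness of $H^0$ all hold over $k = \F_q$, so Maruyama's proof transfers verbatim, and I do not foresee any real obstacle. The only minor observation needed to interpret the hypothesis correctly is that on a smooth curve any nonzero morphism from a line bundle to a locally free sheaf is automatically injective, its kernel being both a subsheaf of a line bundle and torsion --- hence zero. Thus the phrase ``invertible subsheaf of $\E$'' is unambiguous, and the argument above yields the desired upper bound.
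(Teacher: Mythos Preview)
Your proof is correct, but it follows a different path from the paper's. The paper uses Corollary~\ref{extension} (every rank~$2$ locally free sheaf on $C$ sits in an exact sequence $0\to\M\to\E\to\NN\to 0$ with $\M,\NN$ invertible): for any invertible subsheaf $\L\hookrightarrow\E$, the composite $\L\to\E\to\NN$ is either zero (so $\L\hookrightarrow\M$ and $\deg\L\le\deg\M$) or nonzero and hence injective (so $\deg\L\le\deg\NN$). This yields $\deg\L\le\max\{\deg\M,\deg\NN\}$.

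Your argument instead bounds $h^0(\L)$ by $h^0(\E)$ via left exactness and then invokes Riemann--Roch. This is more self-contained: it avoids Corollary~\ref{extension}, whose proof in the paper rests on Theorem~\ref{theo1} and hence on the Hasse--Minkowski theorem, and it works verbatim for locally free $\E$ of any rank. The paper's route, on the other hand, gives a bound expressed directly in terms of the degrees of the pieces of a chosen extension, which is exactly the form needed later when defining and manipulating the Segre invariants. Both arguments are short and either is adequate here.
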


\begin{proof}
We use Corollary \ref{extension} and the notations therein: if the composition of $\L\rightarrow \E$ with $\E\rightarrow \NN$ is non zero, then it must be an injection, and we have $\deg \L \leq \deg \NN$. Else the injection factors through $\L\rightarrow \M$, and we have $\deg \L \leq \deg \M$.
\end{proof}

We define two invariants associated to the sheaf $\E$; the first one has been thoroughly studied \cite{lana,maru}, but to our knowledge, there are almost no result about the second one, except \cite[Theorem 3]{ball} where it is called $1$-order of stability of $\E$. We decided to rename it in order to stress on its similarity with the Segre invariant of a locally free sheaf over an algebraically closed field.

\begin{definition}

Let $\E$ denote a rank $2$ locally free sheaf on $C$. We denote by $\overline{\E}$ the pull back of $\E$ under the extension of scalars $C_{\overline{k}}\rightarrow C$.

The \emph{geometric Segre invariant} of $\E$ is 
\[
s_g(\E):=\deg \E-2\max\left\{\deg\L,~\L\in \Pic(C_{\overline{k}}),~\L\hookrightarrow\overline{\E}\right\}
\]
The \emph{arithmetic Segre invariant} of $\E$ is 
\[
s_a(\E):=\deg \E-2\max\left\{\deg\L,~\L\in \Pic(C),~\L\hookrightarrow\E\right\}
\]
\end{definition}

\begin{remark}
Here are some immediate consequences of the definition:
\begin{itemize}
\item[(i)] we always have $s_g(\E)\leq s_a(\E)$;
\item[(ii)] we also have the congruences $s_g(\E)\equiv s_a(\E) \equiv \deg\E \mod 2$;
\item[(iii)] since $\L$ is a subsheaf of $\E$ if and only if $\L\otimes \M$ is a subsheaf of $\E\otimes \M$, the invariants above only depend on the ruled surface $\P(\E)$ from Theorem \ref{theo1} (4), and we will often speak about the Segre invariants of a surface.
\end{itemize} 
\end{remark}

The following geometric interpretation of these invariants is well-known in the closed case.

\begin{lemma}
\label{minself}
The arithmetic Segre invariant of $\E$ is the minimal self intersection number of the image a section of $\pi:\P(\E)\rightarrow C$ which is defined over $k$.

If we drop the assumption that the section must be defined over $k$, the minimal self intersection number above is the geometric Segre invariant.
\end{lemma}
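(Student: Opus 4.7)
The plan is to match sections $\sigma : C\to X=\P(\E)$ defined over $k$ bijectively with saturated rank one subsheaves of $\E$, compute the self intersection $S^2$ in terms of the degree of the corresponding quotient, and then show that passing to saturations does not change the supremum of degrees appearing in the definition of $s_a(\E)$.

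First, as was already used in the proof of Corollary \ref{extension}, the universal property of projective bundles gives a bijection between sections $\sigma$ of $\pi$ defined over $k$ and invertible quotients $\E\twoheadrightarrow \L$ defined over $k$: the quotient attached to $\sigma$ is $\L=\sigma^\ast\O_{\P(\E)}(1)$, and its kernel $\M=\pi_\ast(\O_X(1)\otimes\O_X(-S))$ is an invertible subsheaf of $\E$ with invertible quotient $\E/\M\simeq \L$, satisfying $\deg\M+\deg\L=\deg\E$.

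Next, I compute $S^2$. Pick the reference section of Theorem \ref{theo1}(1) with image $S_0$, so that $\O_X(S_0)\simeq \O_{\P(\E)}(1)$ and $S_0^2=\deg\E$. Since $S\cdot f=1=S_0\cdot f$, the class $S-S_0$ lies in $\pi^\ast\Pic C$, so $S\sim S_0+\pi^\ast\delta$ for some $\delta\in\Pic C$. By Theorem \ref{theo1}(3),
\[
S^2=S_0^2+2S_0\cdot\pi^\ast\delta=\deg\E+2\deg\delta.
\]
On the other hand, since $\sigma$ is an isomorphism onto $S$,
\[
\deg\L=\deg\sigma^\ast\O_X(S_0)=\deg\bigl(\O_X(S_0)|_S\bigr)=S_0\cdot S=\deg\E+\deg\delta,
\]
so $\deg\delta=\deg\L-\deg\E$ and hence $S^2=2\deg\L-\deg\E=\deg\E-2\deg\M$.

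Finally, the definition of $s_a(\E)$ a priori allows all invertible subsheaves, not only the saturated ones. But any invertible subsheaf $\L_0\hookrightarrow\E$ defined over $k$ is contained in its saturation $\L_0'$, i.e.\ the preimage in $\E$ of the torsion part of $\E/\L_0$; this is again an invertible subsheaf defined over $k$, with $\E/\L_0'$ invertible and $\deg\L_0\leq\deg\L_0'$. Thus the maximum in the definition of $s_a(\E)$ is attained on a saturated subsheaf, which corresponds via the bijection above to a section $\sigma$ over $k$, and then $s_a(\E)=\deg\E-2\deg\M=\min S^2$, the minimum running over images of sections defined over $k$. The geometric case is identical after base change to $\overline{k}$, replacing $\E$ by $\overline{\E}$ and sections over $k$ by sections over $\overline{k}$. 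The only delicate point is this saturation step; once noted, the statement reduces to the universal property of $\P(\E)$ and the intersection table of Theorem \ref{theo1}(3).
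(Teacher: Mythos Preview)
Your proof is correct and follows essentially the same route as the paper: the universal property identifies sections with invertible quotients of $\E$, and the intersection table of Theorem~\ref{theo1}(3) gives $S^2=\deg\E-2\deg\M$ (the paper cites \cite[(V.2.9)]{hart} for the class of $S$, while you recover it directly from $S_0\cdot S$). Your explicit saturation step is in fact a useful addition --- the paper tacitly assumes the maximum in the definition of $s_a(\E)$ is realised by a subsheaf with invertible quotient, whereas you justify this.
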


\begin{proof}
Let $s$ denote a section of $\pi$ which is defined over $k$, with image $S$; it corresponds to a surjection $\E\rightarrow \L$, where $\L\in \Pic C$. From \cite[(V.2.9)]{hart}, we have $S\sim \O_{\P(\E)}(1)+\pi^\ast(\L-\det\E)$, and we deduce from Theorem \ref{theo1} (3) that its self intersection is $S^2=2\deg \L-\deg\E$.

The kernel $\M$ of the surjection $\E\rightarrow \L$ is an invertible subsheaf of $\E$ with $S^2=\deg \E-2\deg\M$, and this is the first part of the result.

The same proof, applied to the sections of $\pi_{\overline{k}} : \P(\overline{\E})\rightarrow C_{\overline{k}}$, gives the last assertion.
\end{proof}

\begin{remark}
\begin{itemize}
\item[(i)] the invariant $s_g(\E)$ is the opposite of the invariant $e$ defined in \cite[(V.2.8)]{hart}.
\item[(ii)] for any $\E$, we have the upper bound $s_g(\E)\leq g$ \cite{naga}.
\end{itemize}
\end{remark}

\begin{lemma}
\label{negseg}
Assume that we have $s_g(\E)<0$. Then we have $s_a(\E)=s_g(\E)$.

Moreover for any curve $C$ over $k$ and any integer $n> 0$ there exists a ruled surface $X\rightarrow C$ with $s_a(\E)=s_g(\E)=-n$.
\end{lemma}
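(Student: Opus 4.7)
The plan is to handle the two assertions of the lemma separately, working directly with the characterisation of the Segre invariants in terms of maximal invertible subsheaves.

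For the first assertion, since the inequality $s_g(\E)\leq s_a(\E)$ always holds, it suffices to produce, under the hypothesis $s_g(\E)<0$, an invertible subsheaf of $\E$ defined over $k$ whose degree matches the one used to compute $s_g(\E)$. The idea is that when $s_g(\E)<0$ a maximising subsheaf is forced to be unique, hence stable under the Galois action, hence descends. Concretely, pick a saturated invertible subsheaf $\overline{\L}\subset\overline{\E}$ with $2\deg\overline{\L}>\deg\overline{\E}$ (any maximising subsheaf may be replaced by its saturation without decreasing its degree). If $\overline{\L}'\subset\overline{\E}$ were another saturated invertible subsheaf with $2\deg\overline{\L}'>\deg\overline{\E}$, then $\overline{\L}'\not\subset\overline{\L}$: two saturated subsheaves of the same rank coincide whenever one is contained in the other, since the quotient would otherwise be a torsion subsheaf of the torsion-free sheaf $\overline{\E}/\overline{\L}$. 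Hence the composition $\overline{\L}'\hookrightarrow\overline{\E}\to\overline{\E}/\overline{\L}$ is a nonzero morphism of invertible sheaves, which forces $\deg\overline{\L}'\leq\deg\overline{\E}-\deg\overline{\L}$ and contradicts $\deg\overline{\L}+\deg\overline{\L}'>\deg\overline{\E}$. Thus $\overline{\L}$ is unique, hence Galois invariant, and Galois descent produces an invertible subsheaf $\L\subset\E$ defined over $k$ with $\deg\L=\deg\overline{\L}$, yielding $s_a(\E)\leq s_g(\E)$.

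For the existence statement, F.~K.~Schmidt's theorem guarantees that the degree map $\Pic C\to\Z$ is surjective, so one may pick a divisor $D$ on $C$ with $\deg D=-n$ and set $\E:=\O_C\oplus\O_C(D)$. The first factor gives an invertible subsheaf $\O_C\hookrightarrow\E$ of degree $0$ defined over $k$, whence $s_a(\E)\leq -n$. On the other hand, any invertible subsheaf $\overline{\L}\hookrightarrow\overline{\E}=\O_{\overline{C}}\oplus\O_{\overline{C}}(\overline{D})$ amounts to a pair of morphisms $\overline{\L}\to\O_{\overline{C}}$ and $\overline{\L}\to\O_{\overline{C}}(\overline{D})$, not both zero; each nonzero component being a morphism of invertible sheaves on the smooth curve $C_{\overline{k}}$ forces $\deg\overline{\L}\leq 0$ or $\deg\overline{\L}\leq -n$ respectively. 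Hence the maximum is $0$, so $s_g(\E)=-n$, and combined with $s_g(\E)\leq s_a(\E)\leq -n$ this yields $s_a(\E)=s_g(\E)=-n$. The desired ruled surface is then $X=\P(\E)$.

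I expect the main subtlety to be the Galois descent step: one must verify that a Galois invariant saturated invertible subsheaf of $\overline{\E}$ really descends to an invertible subsheaf of $\E$ defined over $k$. This is a standard application of faithfully flat Galois descent for coherent sheaves along $C_{\overline{k}}\to C$; noting that $\overline{\L}$ is defined over some finite Galois subextension of $\overline{k}/k$, one may even appeal to ordinary finite Galois descent. Everything else reduces to easy degree arithmetic.
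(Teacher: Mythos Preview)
Your proof is correct and follows the same overall strategy as the paper: uniqueness of the destabilising subsheaf forces Galois stability and hence descent, and a split rank-two bundle realises any prescribed negative invariant. The execution differs in two places worth noting. For the first assertion, the paper invokes \cite[Lemma 1.2]{maru} for uniqueness and then argues through the geometric interpretation of Lemma~\ref{minself}: the unique section of $\pi_{\overline{k}}$ with minimal self-intersection is $\Gamma$-stable because the intersection form is, so that section (and hence the corresponding subsheaf) is defined over $k$. You instead prove uniqueness directly via the standard degree argument on saturated subsheaves and descend the subsheaf itself; this is self-contained and avoids the external reference, at the cost of having to justify the descent step explicitly (which you rightly flag). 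For the second assertion, the paper uses the Weil bound to find closed points $r_1,r_2$ of degrees $d$ and $n+d$ and takes $\E=\O_C(r_1)\oplus\O_C(r_2)$, whereas you invoke F.~K.~Schmidt's theorem to produce a degree $-n$ divisor $D$ directly and take $\E=\O_C\oplus\O_C(D)$; your construction is slightly slicker here.
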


\begin{proof}
As usual we consider the ruled surface $\pi:\P(\E)\rightarrow C$.

It follows from \cite[Lemma 1.2]{maru} that when $s_g(\E)<0$, there is a unique maximal invertible subsheaf of $\E$ in $\Pic C_{\overline{k}}$. Thus there is exactly one section of $\pi_{\overline{k}}$ whose image $S$ has self-intersection number $s_g(\E)$. Since the intersection product is invariant under the action of the Galois group $\Gamma$, we deduce that $S$ is defined over $k$, and $s_a(\E)=s_g(\E)$ from the preceding lemma.

To prove the second assertion, remark that for $d$ a large enough positive integer, the Weil bound ensures us that there exists two points $r_1$ and $r_2$ of respective degrees $d$ and $n+d$ on $C$. We consider the sheaf $\O_C(r_1)\oplus \O_C(r_2)$; its maximal subsheaf is $\O_C(r_2)$, and it is defined over $k$. Thus its Segre invariants are equal, and their value is $d+n+d-2(n+d)=-n$. The surface $X:=\P(\O_C(r_1)\oplus \O_C(r_2))$ satisfies $s_a(\E)=s_g(\E)=-n$.
\end{proof}

We are ready to prove that any ruled surface over $C$ can be obtained from the trivial one $C\times \P^1$ by a sequence of elementary transforms.

\begin{theorem}
\label{elemtransfoseq}
Let $X$ denote a ruled surface over $C$. Then there is a sequence of surfaces $X_i$, $1\leq i\leq n$, and points $x_i$ of degree $d_i$ on $X_i$, $1\leq i\leq n-1$ such that $X_1=C\times \P^1$, $X_n=X$, and $X_{i+1}=\elm_{x_i}(X_i)$ for any $1\leq i\leq n-1$.
\end{theorem}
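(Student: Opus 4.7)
The plan is to reduce $X = \P(\E)$ to $C \times \P^1 = \P(\O_C \oplus \O_C)$ by a chain of elementary transforms and then reverse the chain, using the fact (remarked after Proposition \ref{elemtransfoppties}) that each $\elm_x$ has an inverse of the same type. The reduction proceeds in three stages: (a) drive the geometric Segre invariant of $\E$ below $2-2g$; (b) deduce that $\E$ splits as a direct sum of two line bundles, so $X \simeq \P(\O_C \oplus \L)$ for some $\L \in \Pic(C)$; (c) link any such $\P(\O_C \oplus \L)$ to $C \times \P^1$.

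For stage (a), I would pick a section of $\pi$ over $k$ whose image $S$ satisfies $S^2 = s_a(\E)$ (it exists by Lemma \ref{minself}) and a closed point $x \in S$ of degree $d \geq 1$; since $\pi|_S : S \to C$ is an isomorphism, $\pi(x)$ has degree $d$, so $\elm_x$ is well-defined. Proposition \ref{elemtransfoppties} (3), applied with $a = 1$ and $m_x(S) = 1$, would show that the strict transform of $S$ is a section of $\pi_x : \elm_x(X) \to C$ defined over $k$, of self-intersection $S^2 - d$, whence $s_a(\elm_x(X)) \leq s_a(X) - d$. Iterating on successive strict transforms of $S$, we reach $s_a < 0$; by Lemma \ref{negseg}, thereafter $s_a = s_g$, and further iterations push both invariants to $-\infty$, so eventually $s_g(\E) < 2-2g$.

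For stage (b), Lemma \ref{negseg} provides a maximal invertible subsheaf $\M \hookrightarrow \E$ defined over $k$, and Corollary \ref{extension} gives an exact sequence
$$0 \to \M \to \E \to \NN \to 0$$
classified by $\Ext^1(\NN, \M) = H^1(C, \M \otimes \NN^{-1})$. By Serre duality this is dual to $H^0(C, K_C \otimes \NN \otimes \M^{-1})$, which vanishes since $\deg(K_C \otimes \NN \otimes \M^{-1}) = 2g - 2 + s_g(\E) < 0$. The sequence therefore splits, and after tensoring by $\M^{-1}$ (which does not change the projective bundle, by Theorem \ref{theo1} (4)) we may assume $\E = \O_C \oplus \L$ with $\L := \NN \otimes \M^{-1}$. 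For stage (c), I would use the two canonical sections $t_0, t_\infty$ of $\pi$ corresponding to the quotients $\O_C \oplus \L \to \O_C$ and $\O_C \oplus \L \to \L$. For any closed point $p \in C$, Proposition \ref{elemtransfosheaf} gives $\elm_{t_0(p)}(\P(\O_C \oplus \L)) = \P(\O_C(-p) \oplus \L) \simeq \P(\O_C \oplus \L(p))$ by Theorem \ref{theo1} (4), while the symmetric construction at $t_\infty(p)$ replaces $\L$ by $\L(-p)$. Iterating these two moves, we can modify $\L$ by any $\Z$-linear combination of closed points of $C$, hence by any element of $\Pic(C)$; choosing a representative of the divisor class $-\L$ brings us to $C \times \P^1$.

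The main obstacle is the bookkeeping in stage (a): verifying that the strict transform of a section defined over $k$ is itself a section of the new ruling, with the self-intersection predicted by Proposition \ref{elemtransfoppties} (3). Geometrically, this hinges on the fact that on the blowup at the points $x_1, \ldots, x_d$ above $x$, the strict transform of $S$ meets each exceptional divisor $E_i$ at a point distinct from $\widetilde{f}_i \cap E_i$ (since sections are transverse to fibers), so that contracting the $\widetilde{f}_i$ neither moves the transform nor distorts its self-intersection.
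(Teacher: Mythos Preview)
Your proof is correct and follows essentially the same three-stage architecture as the paper: lower the Segre invariant by elementary transforms centered on a section, deduce that $\E$ splits once $s_g \leq 1-2g$, and then link the split bundle to $C\times\P^1$ via Proposition~\ref{elemtransfosheaf}. The only differences are cosmetic---the paper performs stage~(a) in a single step by choosing a point of degree at least $S^2+2g-1$ rather than iterating, and in stage~(c) it arranges $-\delta$ to be (linearly equivalent to) an effective divisor so that only one of your two canonical sections is needed.
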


\begin{proof}
We first show that this is true when (any of) the Segre invariant(s) of $X=\P(\E)$ is at most $1-2g$. In this case the exact sequence associated to a subsheaf of maximal degree splits \cite[(V.2.12)]{hart}, and we can write $X=\P(\E)$ with $\E\simeq \O_C\oplus \O_C(\delta)$. Since we have $\deg \delta \leq 1-2g$, $-\delta$ is an effective divisor, and we can write $\delta =-(p_1+\cdots+p_n)$ for some points (not necessarily distinct) $p_i$ of degree $d_i$ on $C$.

Now we start from $C\times \P^1=\P(\O_C\oplus\O_C)$, we consider the point $x_1:=(p_1,z)$ for some $z\in \P^1(k)$, and we perform the elementary transform with center $x_1$. The point $x_1$ is on the image of the section $p\mapsto (p,z)$ from $C$ to $C\times \P^1$, which corresponds to the surjective morphism from $\O_C\oplus\O_C$ to $\O_C$ defined by $(f,g)\mapsto z_0f+z_1g$ where $z=(z_0:z_1)$. We apply Proposition \ref{elemtransfosheaf} to this section and we get that $X_2:=\elm_{x_1}(C\times\P^1)\simeq \P(\E_2)$ with $\E_2=\O_C\oplus\O_C(-p_1)$.

On $X_2$, we consider $C_2$, the image of the strict transform of $C\times \{z\}$. This is the image of a section $s_2$, and we have $C_2\sim \O_{\P(\O_C\oplus\O_C)}(1)-E_1$ from Proposition \ref{elemtransfoppties} (3). Thus it has self-intersection $-d_1<0$, and it corresponds to the surjective morphism $\E_2\rightarrow \O_C(-p_1)$. Then we set $x_2:=s_2(p_2)$, and we consider the surface $X_3:=\elm_{x_2}(X_2)$.

Continuing in this fashion, we get the surface $X_{n+1}=\P(\E_{n+1})$ after $n$ steps, where $\E_{n+1}=\O_C\oplus \O_C(-p_1-\cdots-p_n)\simeq \E$. This is the desired surface.

We turn to the general case. Let $X$ denote a ruled surface over $C$, and $S$ the image of a section defined over $k$. Let us choose a point $x$ on $S$ of degree at least $S^2+2g-1$ (such a point exists by the Hasse-Weil lower bound). On the surface $\elm_x(X)$, the image of the strict transform of $S$ is the image of a section, and it has self intersection number at most $1-2g$ from Proposition \ref{elemtransfoppties} (3) (note that a section is smooth and consequently we have $m_x(S)=1$). Thus we are reduced to the preceding case.
\end{proof}

In general the inequality $s_a(X)\geq s_g(X)$ is strict; let us treat some examples.

Recall that the degree of a function on $C$ is the degree of its divisor of zeroes (or of poles). The degree of any constant function is zero.

\begin{lemma}
\label{segretrans}
Let $x$ denote a point of degree $e$ on $C\times \P^1$, such that $p_1(x)$ is a point of degree $e$ on $C$. Assume that $x$ does not lie on the graph of any function of degree at most $d$ in the algebraic function field $K$. Then the Segre invariant of the surface $X:=\elm_x(C\times \P^1)$ satisfies $s_a(X)\geq \min\{e,2(d+1)-e\}$.
\end{lemma}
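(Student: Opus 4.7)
The plan is to invoke Lemma~\ref{minself}, which identifies $s_a(X)$ with the minimal self-intersection of a $k$-rational section of $\pi_X:X\to C$, and then to bound this minimum by transferring the problem to $C\times\P^1$ via the inverse elementary transform.

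Let $y$ be the point of degree $e$ on $X$ described in the remark following Proposition~\ref{elemtransfoppties} (the image of the strict transform of the fiber $p_1^{-1}(p_1(x))$); then $\elm_y$ inverts $\elm_x$. Since each elementary transform is an isomorphism away from the fiber over $p_1(x)$, every $k$-rational section of $\pi_X$ is obtained by taking the graph $\Gamma_f$ of some $f\in K\cup\{\infty\}$ on $C\times\P^1$, applying $\elm_x$, and taking closure. I would then verify the key incidence correspondence: the section $T=\elm_x(\Gamma_f)$ passes through $y$ if and only if $x\notin\Gamma_f$. Indeed, if $\Gamma_f$ avoids $x$ then its strict transform (equal to $\Gamma_f$) meets each of the contracted curves $\widetilde{f_i}$, and these intersection points collapse onto $y$; whereas if $\Gamma_f$ passes through $x$, then the tangent direction of $\Gamma_f$ at each $x_i$ differs from that of the fiber $f_i$, so $\widetilde{\Gamma_f}$ is disjoint from the $\widetilde{f_i}$'s and $T$ misses $y$.

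Next I would compute self-intersections. The numerical class of $\Gamma_f$ in $C\times\P^1$ is $(\deg f)F+H$, where $F$ and $H$ denote the classes of a fiber of $p_1$ and of a section respectively, giving $\Gamma_f^2=2\deg f$ (for constant $f$ this specializes to $0$). Applying Proposition~\ref{elemtransfoppties}(3) with $a=1$ and center of degree $e$, and noting $m_x(\Gamma_f)\in\{0,1\}$ since $\Gamma_f$ is smooth, I obtain the following. If $T$ passes through $y$ then $\Gamma_f$ avoids $x$, so $T^2=2\deg f+e\geq e$. If $T$ does not pass through $y$, then $\Gamma_f$ contains $x$; by hypothesis $\deg f\geq d+1$, so $T^2=2\deg f-e\geq 2(d+1)-e$. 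Taking the minimum over all $k$-rational sections $T$ yields $s_a(X)\geq\min\{e,2(d+1)-e\}$.

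The main obstacle is the ``inverted'' incidence between $T$ and $y$ on $X$ versus $\Gamma_f$ and $x$ on $C\times\P^1$: this reversal is built into the blowup-then-contract definition of $\elm_x$ and must be tracked carefully, but once established, the bound reduces to two direct applications of Proposition~\ref{elemtransfoppties}(3).
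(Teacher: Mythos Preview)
Your proof is correct and follows essentially the same route as the paper: both identify $k$-sections of $\pi_X$ with $k$-sections of $p_1$ via the birational map $\elm_x$, and then apply Proposition~\ref{elemtransfoppties}(3) in the two cases according to whether the original graph $\Gamma_f$ (the paper's $S_1$) passes through $x$. Your detour through the point $y$ and the ``inverted incidence'' relation is correct but unnecessary---the paper simply case-splits directly on $m_x(\Gamma_f)\in\{0,1\}$.
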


\begin{proof}
From Lemma \ref{minself}, we have to give a lower bound for the possible self intersection numbers of the images of the sections of $\pi:X\rightarrow C$.

The image $S$ of such a section is the strict transform of the image $S_1$ of a section $s_1$ of the projection $p_1:C\times \P^1\rightarrow C$. Now $S_1$ is just the graph of the morphism $p_2\circ s_1$ from $C$ to $\P^1$, and it is numerically equivalent to $\O_{\P(\O_C\oplus\O_C)}(1)+d_1f$, where $d_1$ is the degree of the preceding morphism. 

Since $S_1$ is defined over $k$, it contains all the geometric points in the Galois orbit $x$ or it does not contain any. Now we use Proposition \ref{elemtransfoppties} (3). In the first case, the numerical class of the image in $X$ of its strict transform is $S\equiv \O_{\P(\O_C\oplus\O_C)}(1)-E+d_1f$ whose self intersection is $2d_1-e$. In the second case we get $S\equiv \O_{\P(\O_C\oplus\O_C)}(1)-E+(d_1+e)f$ for the class and $2d_1+e$ for the self intersection number. 

From our hypothesis, the first case can occur only if we have $d_1\geq d+1$, and this gives the desired lower bound on the self intersection numbers. 
\end{proof}

From this construction, we deduce that there exists ruled surfaces over a curve with high arithmetic Segre invariant, provided that its class number is not too large.

\begin{proposition}
\label{gplus1}
Let $C$ denote a curve of genus $g\geq 1$. Assume that its class number satisfies $h\leq q^g-q^{g-1}$, and that it has a point of degree $3g-1$. 

Then there is a ruled surface $X\rightarrow C$ with arithmetic Segre invariant $s_a(X)\geq g+1$.
\end{proposition}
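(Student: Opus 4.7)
The plan is to construct $X$ as $\elm_x(C\times\P^1)$ for a suitably chosen degree-$(3g-1)$ point $x$ on $C\times\P^1$ with $p_1(x)$ equal to the given degree-$(3g-1)$ point $p$ on $C$, and then to invoke Lemma~\ref{segretrans} with $e=3g-1$ and $d=2g-1$. Since $\min\{e,2(d+1)-e\}=\min\{3g-1,g+1\}=g+1$ for $g\geq 1$, the lemma immediately delivers $s_a(X)\geq g+1$ as soon as $x$ avoids the graph of every $f\in K$ with $\deg f\leq 2g-1$.

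I would write $x=(p,y)$ with $y\in\P^1(k(p))$, which automatically makes $x$ a closed point of degree $3g-1$, and take $y$ finite in $k(p)$. The first key observation is that no $f\in K$ of degree at most $2g-1$ can have a pole at $p$: any such pole would contribute at least $\deg p=3g-1>2g-1$ to $\deg f$. Consequently every such $f$ is regular at $p$, the value $f(p)$ lies in $k(p)$, and the forbidden set $\{f(p):f\in K,\,\deg f\leq 2g-1\}$ is contained in $k(p)$, which has $q^{3g-1}$ elements; it is enough to show this forbidden set is a proper subset.

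For the counting I would cover the functions of degree $\leq 2g-1$ by Riemann-Roch spaces of degree $2g-1$. For each of the $h$ classes $[\mathcal{E}]\in\Pic^{2g-1}(C)$, Riemann-Roch yields $h^0([\mathcal{E}])=g$ and, since $\deg([\mathcal{E}]-p)=-g<0$, also $h^0([\mathcal{E}]-p)=0$. The first guarantees an effective representative $D_{\mathcal{E}}$; the second forces every such representative to have support disjoint from $p$, so the evaluation map $\mathrm{ev}_p\colon H^0(D_{\mathcal{E}})\to k(p)$ is a well defined $k$-linear map with image of cardinality at most $q^g$. Summing over the $h$ classes and invoking the hypothesis $h\leq q^g-q^{g-1}=q^{g-1}(q-1)$ yields at most
\[
h\cdot q^g\;\leq\;q^{g-1}(q-1)\cdot q^g\;=\;q^{2g}-q^{2g-1}\;<\;q^{3g-1}
\]
possible values of $f(p)$, so some $y\in k(p)$ is admissible and $x=(p,y)$ furnishes the desired surface.

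The step demanding the most care is the containment of the forbidden set in $\bigcup_{[\mathcal{E}]}\mathrm{ev}_p H^0(D_{\mathcal{E}})$, which amounts to enlarging the pole divisor $(f)_\infty$ to an effective divisor of degree $2g-1$; this is routine when $C$ carries effective divisors of each intermediate degree, and in the exceptional case where some such degree is inaccessible the corresponding $F_{d'}$ is itself empty and the bound holds \emph{a fortiori}. One should also keep in mind that taking $y$ finite in $k(p)$ is essential: the ``section at infinity'' $C\times\{\infty\}$ is a section of $p_1$ that is not the graph of any element of $K$, and letting $x$ lie on it would cause the elementary transform to produce a section of self intersection $-(3g-1)$, ruining the bound.
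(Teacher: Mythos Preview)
Your strategy is exactly the paper's: take $X=\elm_x(C\times\P^1)$ for a well-chosen degree-$(3g-1)$ point $x=(p,y)$ and apply Lemma~\ref{segretrans} with $(e,d)=(3g-1,2g-1)$. The paper then bounds the \emph{number} of functions $F_{2g-1}$ and compares it to $\#\P^1(\F_{q^{3g-1}})$; you instead bound the set of values $\{f(p)\}\subset k(p)$.

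Your counting, however, has a real gap. You select \emph{one} effective representative $D_{\mathcal{E}}$ per class in $\Pic^{2g-1}(C)$ and assert that the forbidden set lies in $\bigcup_{[\mathcal{E}]}\mathrm{ev}_p L(D_{\mathcal{E}})$. But $f\in L(D)$ requires $(f)_\infty\leq D$, and two distinct effective divisors $D\sim D'$ of degree $2g-1$ give genuinely different subspaces $L(D)\neq L(D')$ of $K$: if $(f)_\infty=D'\neq D_{\mathcal{E}}$ then $f\notin L(D_{\mathcal{E}})$, and writing $D_{\mathcal{E}}-D'=(h)$ one finds $f(p)\in h(p)^{-1}\cdot\mathrm{ev}_p L(D_{\mathcal{E}})$, a $k$-subspace of $k(p)$ that differs from $\mathrm{ev}_p L(D_{\mathcal{E}})$ whenever $h(p)\notin k$. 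So the bound $h\cdot q^g$ on the forbidden set is unjustified. The paper sidesteps this by summing over \emph{all} effective divisors of degree $2g-1$---this is the factor $|D|=(q^g-1)/(q-1)$ in its estimate. Your argument is repaired the same way: there are at most $h\,(q^g-1)/(q-1)\leq q^{g-1}(q^g-1)$ such divisors, each contributing at most $q^g$ values, hence at most $q^{2g-1}(q^g-1)<q^{3g-1}$ forbidden values, which still suffices.

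A smaller point: your closing caveat has its direction reversed. Enlarging $(f)_\infty$ to degree $2g-1$ requires an effective divisor of degree $2g-1-\deg f$, and when \emph{that} complementary degree is a gap for $C$, the functions of degree $\deg f$ are certainly not thereby absent (e.g.\ a pointless hyperelliptic curve of genus~$2$ has functions of degree~$2$ but no effective divisor of degree~$1$). The paper, incidentally, glosses over this same issue.
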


\begin{proof}
Our strategy is to use the preceding lemma. We construct a point $x$ of degree $3g-1$ on $C\times \P^1$ which does not lie on the graph of any function of degree at most $2g-1$ in the function field $K$; then the elementary transform $X:=\elm_x(C\times \P^1)$ satisfies the desired property.

We count the functions of degree at most $2g-1$; if $f$ is such a function, we have $f\in L((f)_\infty)$: there exists some rational effective divisor $D\in \Eff_{2g-1}C$ such that $f\in L(D)$. Since any of the spaces $L(D)$ contains $k$, we can give an upper bound for the number $F_{2g-1}$ of functions of degree at most $2g-1$ in $K$ by considering the union $k\cup (\cup_D L(D)\setminus k)$ where $D$ runs in $\Eff_{2g-1}C$. We get
\[
F_{2g-1}\leq q + \sum_{D\in \Eff_{2g-1}C}(q^{h^0(D)}-q)=q + \sum_{D\in \Pic^{2g-1}C}|D|(q^{h^0(D)}-q)
\] 
where $|D|$ is the linear system associated to $D$, containing $(q^{h_0(D)}-1)/(q-1)$ divisors. From the Riemann-Roch theorem all divisors of degree $2g-1$ are non special, and we always have $h^0(D)=g$. Since $\Pic^{2g-1}C$ contains $h$ elements, we get the inequality $F_{2g-1}\leq q+h(q^{g}-q)(q^g-1)/(q-1)$.

If we have $h\leq q^g-q^{g-1}$, then we have $F_{2g-1}\leq q+q^{g-1}(q^g-1)(q^g-q)=q^{3g-1}-q^{2g}-q^{2g-1}+q^g+q<q^{3g-1}+1$. From our assumption there is a point $p$ of degree $3g-1$ on $C$, and if $r\in C(\F_{q^{3g-1}})$ is a point whose orbit under the action of $\Gamma$ is $p$, then its images by the functions of degree at most $2g-1$ cannot cover all the points in $\P^1(\F_{q^{3g-1}})$; thus there exists a point of degree $3g-1$ in $C\times \P^1$ which is not on the graph of any function of degree at most $2g-1$. 
\end{proof}

\begin{remark}
The consideration of other invariants of the curve may be relevant for the construction of ruled surfaces with high arithmetic Segre invariant. In the above proof, we have counted functions of bounded degree, and used this count in order to give an upper bound for the number of images of a fixed point of the curve. But the evaluation map need not be injective, and we may have fewer images. This happens for instance when the curve has a non trivial automorphism: in this case for any morphism $f:C\rightarrow \P^1$ and automorphism $\varphi$ of $C$, both $f$ and $f\circ \varphi$ share the same images.
\end{remark}

We now give an upper bound for the arithmetic Segre invariant. Note that it is attained in some cases (for instance when the base curve is elliptic, contains at most $q-1$ rational points and at least a point of degree two) from the proposition above.

\begin{proposition}
For any rank two sheaf $\E$ on $C$ of genus $g$, we have $s_a(\E)\leq 2g$.
\end{proposition}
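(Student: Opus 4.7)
The plan is to translate the bound $s_a(\E)\leq 2g$ into a statement about invertible subsheaves: by definition of $s_a$, it suffices to exhibit an invertible subsheaf $\M\hookrightarrow \E$ defined over $k$ with $\deg\M\geq (\deg\E-2g)/2$. I would construct such a subsheaf by a Riemann--Roch argument, exactly as in the algebraically closed case, with one small twist to remain over $k$.

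Concretely, set $m:=\lceil(\deg\E-2g)/2\rceil$. For any line bundle $\L\in\Pic(C)$ of degree $m$, Riemann--Roch applied to the rank two sheaf $\E\otimes\L^{-1}$ gives
\[
h^0(C,\E\otimes\L^{-1})\geq \chi(\E\otimes\L^{-1})=\deg\E-2m+2-2g\geq 1,
\]
so any such $\L$ admits a nonzero $\O_C$-linear map $\L\to \E$. Taking the saturation of the image (the preimage in $\E$ of the torsion subsheaf of the cokernel of $\L\to \E$) yields an invertible subsheaf $\M\hookrightarrow \E$ of degree at least $m\geq (\deg\E-2g)/2$, and the definition of $s_a(\E)$ then delivers $s_a(\E)\leq \deg\E-2\deg\M\leq 2g$.

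The only step specific to the finite field setting is the existence of such an $\L\in\Pic(C)$ of the prescribed degree $m$, which could a priori be any integer. This is guaranteed by F.~K.~Schmidt's theorem, according to which every smooth projective curve over a finite field admits a $k$-rational divisor of degree one, so that the degree map $\Pic(C)\to \Z$ is surjective. Everything else (Riemann--Roch and the saturation construction) is canonical and insensitive to the base field, so the resulting $\M$ is automatically defined over $k$. The factor of two compared to the Nagata bound $s_g(\E)\leq g$ in the algebraically closed setting reflects precisely that the optimal line bundle of degree $\lceil(\deg\E-g)/2\rceil$ produced over $C_{\overline k}$ need not descend to $\Pic(C)$, forcing us to lower $m$ by up to $g/2$ in order to guarantee a nonzero section via Riemann--Roch alone.
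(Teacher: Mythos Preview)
Your proof is correct, and it takes a genuinely different route from the paper's. The paper argues on the ruled surface side: it writes $X=\P(\E)$ as an image of $C\times\P^1$ under a sequence of elementary transforms of total degree $d$, then uses a dimension count (with Riemann--Roch on the curve) to find a global section of $C_1+\pi^\ast\L$ passing through all the centers, whence a section of $\pi$ with self-intersection at most $2g$. Your approach works directly on the sheaf: Riemann--Roch applied to $\E\otimes\L^{-1}$ for any $\L$ of degree $m=\lceil(\deg\E-2g)/2\rceil$ forces a nonzero map $\L\to\E$, and Schmidt's theorem supplies such an $\L$ over $k$. This is shorter and avoids the elementary-transform machinery entirely; the paper's argument, by contrast, fits the narrative of the surrounding sections and illustrates the use of Theorem~\ref{elemtransfoseq}. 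Note incidentally that the saturation step is harmless but unnecessary here, since any nonzero $\L\to\E$ on a smooth curve is already injective and $\deg\L=m$ suffices.

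One small caveat: your closing paragraph mischaracterises how Nagata's bound $s_g(\E)\leq g$ is obtained. The na\"{i}ve Riemann--Roch count with $m=\lceil(\deg\E-g)/2\rceil$ gives $\chi(\E\otimes\L^{-1})\leq 2-g$, which is useless for $g\geq 2$; Nagata's argument is genuinely subtler (it exploits the geometry of the secant variety, or equivalently a more refined existence argument for subbundles). So the gap between $g$ and $2g$ is not simply a descent obstruction for a specific line bundle. This does not affect the validity of your proof of the stated bound.
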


\begin{proof}
We know from Theorem \ref{elemtransfoseq} that the surface $\pi:X=\P(\E)\rightarrow C$ is obtained from $C\times \P^1$ by a sequence of elementary transforms. We denote the consecutive surfaces by $X_i$ and the centers by $x_i$ (of degree $d_i$ on $X_i$), $1\leq i\leq n+1$. We have $X_1=C\times \P^1$, $X_{n+1}=X$, and $X_{i+1}=\elm_{x_i}(X_i)$ for any $1\leq i\leq n$. 

Let $d:=d_1+\ldots+d_n$, and $C_1\sim \O_{\P(\O_C\oplus\O_C)}(1)$ an ``horizontal'' section on $X_1$ (thus we have $C_1^2=0$).

Fix some $\L\in \Pic(C)$ of large enough degree; the dimension of the space of global sections of $C_1+\pi^\ast\L$ is $2h^0(\L)$, and the dimension of the subspace of those sections passing through $x_1$ is at least $2h^0(\L)-d_1$. This subspace becomes the complete linear system associated to $C_1-E_1+\pi^\ast\L$ on the blowup of $X_1$ at $x_1$, and since this last divisor is orthogonal to the divisor $\pi^\ast x_1-E_1$ that we contract in the second part of the elementary transform, it remains a divisor on $X_2$, with the same linear system.

We continue in the same fashion, and we obtain the linear system on $X_{n+1}=X$, associated to the divisor $C_1-E_1-\cdots-E_n+\pi^\ast\L$, of dimension at least $2h^0(\L)-d$. A section $S$ in this linear system has self intersection $2\deg\L-d$. If it is irreducible it is the image of a section of $\pi$; else there is a unique section of $\pi$ whose image is among the irreducible components of $S$, which must be defined over $k$, and has self-intersection less than $2\deg\L-d$.

We have constructed a section of $\pi$, defined over $k$, with self intersection $2\deg\L-d$ as long as we have $d<2h^0(\L)$. From Riemann-Roch theorem, it is sufficient to guarantee that we have $d<2(\deg\L+1-g)$, and we choose $d=2\deg\L+1-2g$ if $d$ is odd, and $d=2\deg\L-2g$ if $d$ is even. We get a section whose image has self intersection $2g-1$ in the first case, and $2g$ in the second.
\end{proof}

An interesting question is the following: given a genus $g$ curve over $k$ and a couple of integers $(s_g,s_a)$ satisfying the bounds $s_g\leq g$, $s_a\leq 2g$ and the congruence $s_g\equiv s_a \mod 2$, does there exist a ruled surface over $C$ whose Segre invariants are respectively $s_g$ and $s_a$? 

We have seen that when both numbers are negative, a necessary and sufficient condition is that they are equal. When they are positive, this condition no longer holds since we have constructed surfaces whose arithmetic Segre invariant is at least $g+1$, whence the geometric one is at most $g$ from Nagata's bound. We will see later that the answer to this question is negative; actually in Proposition \ref{segpts}, we prove that under a condition on the number of rational points of $C$, we can get strictly better bounds on the arithmetic invariants of sheaves on $C$.

This leads us to ask the following

\begin{question}
Given a genus $g$ curve defined over $\F_q$, for which couples of integers $(m,n)$ does there exist a ruled surface $\pi:X\rightarrow C$ defined over $\F_q$ such that $s_g=m$ and $s_a=n$ ?
\end{question}

Let us treat the case $g=1$, when $C$ is an elliptic curve. We have already constructed ruled surfaces with any negative invariant; of course the product $C\times \P^1$ gives the zero invariants; thus the only remaining couples are $(0,2)$ and $(1,1)$.

For the first couple, the two Segre invariants are distinct. Assume that $X$ is a surface with these invariants. We deduce that there are at least two sections whose images have minimal self-intersection, and that they are not defined over $k$. From \cite[Corollary 1.6 (i)]{maru}, we deduce that we have 
$X=\P(\E)$ for some $\E\simeq \O_C\oplus\O_C(D)$, $D$ a divisor of degree $0$. Since the curve is elliptic, we can write $D=p-q$ for two distinct points in $C(\overline{k})$. Since $X$ is defined over $k$, we must have $p,q\in C(\F_{q^2})$ and $p^\tau=q$. We deduce that we have $C(\F_q)\subsetneq C(\F_{q^2})$. 

Conversely, if this last condition is satisfied, then we choose some point $x$ on $C\times \P^1$ such that $p_1(x)\in C(\F_{q^2})\setminus C(\F_{q})$, and $p_2(x)\in \P^1(\F_{q^2})\setminus \P^1(\F_{q})$. The image of $C\times \P^1$ under the elementary transform with center $x$ has arithmetic Segre invariant $2$ from Lemma \ref{segretrans} since there is no degree one function on an elliptic curve.

We turn to the second couple $(1,1)$. We know from the Weil bound that $C$ has a $k$-rational point $p$, and we deduce from the Riemann-Roch theorem that the space $H^1(\O_C(-p))$ is one dimensional. Thus there exists a non trivial extension $\E$ of $\O_C(p)$ by $\O_C$. It has degree one, and geometric Segre invariant one since $\O_C$ is a maximal subsheaf. The only possible arithmetic Segre invariant is $1$, and we are done.

%\begin{remark}
%Over an algebraically closed field, all the extensions above are pairwise isomorphic, regardless the choice of the point $p$. What can we say about the isomorphism classes in our setting (see \cite{weza}) ?
%\end{remark}

We summarize the results of the preceding discussion.

\begin{proposition}
Let $C$ denote an elliptic curve defined over $k=\F_q$. Then for any admissible couple $(s_g,s_a)\neq (0,2)$, there exists a ruled surface $X\rightarrow C$ having these Segre invariants. 

There exists a ruled surface over $C$ with invariants $(0,2)$ if and only if we have $C(\F_{q^2})\neq C(\F_q)$.
\end{proposition}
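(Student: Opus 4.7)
The plan is to enumerate the admissible couples $(s_g, s_a)$ and either construct an example or rule them out. The constraints---Nagata's bound $s_g \leq 1$, the arithmetic bound $s_a \leq 2$ from the preceding proposition, $s_g \leq s_a$, and $s_g \equiv s_a \pmod 2$---leave only the families $(-n,-n)$ for $n \geq 1$, together with $(0,0)$, $(1,1)$, and $(0,2)$. The first three I would handle directly: Lemma~\ref{negseg} realises the negative diagonal; the trivial rank-two bundle gives $X = C \times \P^1$ with invariants $(0,0)$; and for $(1,1)$ I would take a non-split extension $0 \to \O_C \to \E \to \O_C(p) \to 0$ at a $k$-rational point $p$ (which exists by the Weil bound, and such an extension exists since $h^1(C, \O_C(-p)) = 1$ by Riemann-Roch on the elliptic curve). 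Indecomposability makes $\O_C$ the unique geometric maximal subsheaf, yielding $s_g = 1$, and the bounds together with the parity constraint force $s_a = 1$.

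The heart of the statement is the equivalence for the couple $(0,2)$. For the forward direction, suppose $X = \P(\E)$ realises $(0,2)$. Then no geometric maximal subsheaf of $\overline{\E}$ can be Galois-invariant (otherwise it would descend to $k$, forcing $s_a = 0$), so there are at least two such subsheaves. Over $\overline{k}$, \cite[Corollary 1.6(i)]{maru} then gives $\overline{\E} \simeq \O_{C_{\overline{k}}} \oplus \O_{C_{\overline{k}}}(D)$ for some non-trivial divisor $D$ of degree $0$. Writing $D = p - q$ with distinct $p,q \in C(\overline{k})$, the Galois swap of the two summands (as isomorphism classes of maximal subsheaves) translates to $\tau(p) = q$, whence $p \in C(\F_{q^2}) \setminus C(\F_q)$.

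For the converse direction, assume $C(\F_{q^2}) \neq C(\F_q)$. Choose $p \in C(\F_{q^2}) \setminus C(\F_q)$ and $z \in \P^1(\F_{q^2}) \setminus \P^1(\F_q)$, and let $x \in C \times \P^1$ be the closed point of degree two whose geometric components are $(p,z)$ and $(\bar p, \bar z)$. Since the elliptic curve admits no non-constant function of degree one, and no constant function passes through $x$ (as $z \notin \P^1(\F_q)$), the point $x$ does not lie on the graph of any function of degree $\leq 1$. Lemma~\ref{segretrans} with $e = 2$ and $d = 1$ then gives $s_a(\elm_x(C \times \P^1)) \geq \min(2,2) = 2$, and the upper bound $s_a \leq 2g$ forces $s_a = 2$.

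The main obstacle is pinning down that the geometric invariant of $X = \elm_x(C \times \P^1)$ is exactly $0$, rather than some $s_g \leq -2$ (which would be compatible with $s_a = 2$ by parity alone). For this I would argue in two steps: (i) the strict transform of the geometric section $C_{\overline{k}} \times \{z\}$ in $X_{\overline{k}}$ has self-intersection $0$, by applying Proposition~\ref{elemtransfoppties}(3) twice---at $(p,z)$ the multiplicity $1$ contributes $-1$, and at $(\bar p, \bar z)$ the multiplicity $0$ contributes $+1$, for net change zero---whence $s_g \leq 0$; (ii) if $s_g \leq -2$, then by \cite[Lemma 1.2]{maru} the geometric maximal subsheaf would be unique, hence Galois-invariant and defined over $k$, forcing $s_a = s_g < 0$ and contradicting $s_a = 2$. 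Combined with the parity $s_g \equiv 0 \pmod 2$, this yields $s_g = 0$.
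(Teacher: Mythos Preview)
Your proof is correct and follows essentially the same route as the paper: the same case split, the same construction for $(1,1)$ via a non-split extension, the same use of Maruyama's \cite[Corollary 1.6(i)]{maru} for the forward direction of $(0,2)$, and the same elementary transform plus Lemma~\ref{segretrans} for the converse. The only notable difference is that you spell out why $s_g=0$ in the constructed example, whereas the paper leaves this implicit (it follows already from $s_a=2$, Lemma~\ref{negseg}, Nagata's bound, and parity); in particular your step (i) constructing an explicit geometric section of self-intersection $0$ is correct but unnecessary, and your step (ii) could equally well invoke Lemma~\ref{negseg} directly to rule out all $s_g<0$ rather than just $s_g\le -2$.
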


\section{Codes on ruled surfaces}

In this section, we fix a genus $g$ curve $C$ defined over $k$, and we consider certain ruled surfaces $\pi:X\rightarrow C$ over $k$; we fix an effective divisor $D$ and we denote by $\C$ the code obtained by evaluating the global sections of $D$ at the points of $X(\F_q)$.

\subsection{A first family}

Here we consider ruled surfaces which are the images of the product surface $C\times\P^1$ by some elementary transform.

Let $d\geq 2$ denote an integer. We assume that $C$ contains both a $k$-rational point and a point of degree $d$, and we choose a point $x$ of degree $d$ on $C\times \P^1$, such that its image on $C$ remains a point of degree $d$, and its image on $\P^1$ has degree at least $2$.

We denote by $X:=\elm_x(C\times\P^1)$ the ruled surface over $C$ defined as the image of the elementary transform with center $x$.

Recall from Proposition \ref{elemtransfoppties} that if we denote by $C_0$ the class of the invertible sheaf $\O_{\P(\O_C\oplus\O_C)}(1)$ in $\Pic (C\times\P^1)$, we can identify the group $\Pic X$ with $\Z(C_0-E)+\pi^\ast\Pic C$.

We fix a divisor $D$ in the numerical class $a(C_0-E)+bf$, where $a,b\geq 0$, and we consider the code obtained by evaluating the global sections of $D$ at the points of $X(\F_q)$.

It follows from Theorem \ref{theo1} (5) that the length of the code is $(q+1)N$ (this is why we assumed this set is not empty).

To determine its dimension, we use the Riemann-Roch theorem on $X$. From the description of the canonical divisor on $X$ in Proposition \ref{elemtransfoppties}, we get the Euler characteristic
\[
\chi(\O_X(D))=\frac{1}{2}D\cdot(D-K_X)+\chi(\O_X)=(a+1)(b+1-g)-d\frac{a(a+1)}{2}
\]
Since we have $a\geq 0$, we deduce from \cite[(V.2.4)]{hart} that we have $h^2(\O_X(D))=h^2(\pi_\ast\O_X(D))$. But this last dimension vanishes: we have a cohomology space coming from Zariski's topology on a curve. As a consequence, we have $\chi(\O_X(D))=h^0(\O_X(D))-h^1(\O_X(D))$, and $\chi(\O_X(D))$ is a lower bound for the dimension of the space of global sections of $D$. We also get a lower bound for the code dimension, as long as the evaluation map is injective, ie when the minimum distance is positive.

In order to estimate the minimum distance, we use some ideas that date back to Hansen \cite[Section 3.2]{hans}. 

We first need some irreducible curves on $X$ that cover all the evaluation points. We choose the images $F_1,\ldots,F_{q+1}$ by the elementary transform of the fibers of $p_2:C\times\P^1\rightarrow\P^1$ over the points in $\P^1(k)$. None of these fibers passes through $x$ (we assumed that its image on $\P^1$ has degree at least $2$), and since their class is $C_0$ in $\Pic (C\times\P^1)$, we deduce from Proposition \ref{elemtransfoppties} (3) that we have $F_i\equiv C_0-E+df$.

Let $s$ denote a global section of $\O_X(D)$, and $S$ the corresponding Weil divisor on $X$. We wish to bound from above the number of rational points $\#S(k)$. We slightly refine Hansen's result by taking into account the number of covering curves contained in a section all along the calculation.

Assume that, among its irreducible components, $S$ contains exactly $n$ curves among the $F_i$ above. The union $S'$ of the remaining irreducible components of $X$ is numerically equivalent to $(a-n)(C_0-E)+(b-dn)f$. As a consequence, the number of intersection points of $S'$ with any of the $q+1-n$ remaining curves $F_i$ is at most the intersection number $S'\cdot F_i=b-dn$, and since all rational points on $X$ lie on exactly one of the curves $F_i$, we get the bound
\[
\# S(k)\leq nN+(q+1-n)(b-dn)
\]
Now we let $n$ vary. The above expression is a convex function of $n$, and it is maximal at the extremities of the interval containing the possible values for $n$. 

We determine this interval: any effective divisor on $X$ which is not numerically equivalent to $f$ is the image of an effective divisor on $C\times \P^1$ by the elementary transform. If a divisor numerically equivalent to $a'C_0+b'f$ on $C\times \P^1$ is effective, then we have $a',b'\geq 0$, and the numerical class of the image in $X$ of its strict transform is $a'(C_0-E)+(b'+d(a'-m'))f$, where $m'$ is its multiplicity at $x$, which is at most $a'$. We deduce that if a divisor on $X$ is numerically equivalent to $s(C_0-E)+tf$, then we have $s,t\geq 0$.

If we apply this to the divisor $S'$ above, we get $a-n\geq 0$ and $b-dn\geq 0$, that is $0\leq n\leq \min\{a,\lfloor\frac{b}{d}\rfloor\}$. Finally we have proven the 

\begin{proposition}
\label{famcodes1}
Let $X$ denote the image of $C\times\P^1$ by the elementary transform with center a point $x$ of degree $d$ as above.

We fix any divisor $D\equiv a(C_0-E)+bf$ on $X$, and set $m:= \min\{a,\lfloor\frac{b}{d}\rfloor\}$.

Assume that the integers $a$ and $b$ satisfy the inequalities $0\leq b<N$, $0\leq m<q+1$ and $ad<2(b+1-g)$. The code obtained by evaluating the global sections of $D$ at the rational points of $X$ has the following parameters
\begin{itemize}
 \item[(i)] its length is $(q+1)N$;
  \item[(ii)] its dimension is at least $(a+1)(b+1-g)-d\frac{a(a+1)}{2}$;
  \item[(iii)] its minimum distance satisfies 
\[
d_{\rm min}\geq \min\{(q+1)(N-b),~(q+1-m)(N-b+dm)\}
\]
 \end{itemize} 
\end{proposition}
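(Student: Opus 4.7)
The plan is to assemble items (i)--(iii) from the calculations developed just before the statement, checking at each step that the stated hypotheses on $a$, $b$, and $m$ supply exactly what is needed.

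Item (i) is immediate from Theorem \ref{theo1}(5): the formula $Z(X,t) = Z(C,t)Z(C,qt)$ gives $\#X(k) = (q+1)N$, which is the length of an evaluation code at all rational points.

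For item (ii), I would compute the Euler characteristic of $\O_X(D)$ via Riemann--Roch on $X$: using the intersection form and canonical class from Proposition \ref{elemtransfoppties}, one gets
\[
\chi(\O_X(D)) = \tfrac{1}{2}D\cdot(D-K_X) + \chi(\O_X) = (a+1)(b+1-g) - d\tfrac{a(a+1)}{2}.
\]
Then I would show $h^2(\O_X(D))=0$: since $a\ge 0$, \cite[(V.2.4)]{hart} yields $H^i(X,\O_X(D))\cong H^i(C,\pi_\ast\O_X(D))$, which vanishes for $i=2$ because $C$ is a curve. Hence $h^0\ge \chi$, and the hypothesis $ad<2(b+1-g)$ (equivalent to $\chi>0$) makes the bound meaningful. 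Injectivity of the evaluation map, and thus $\dim\C\ge \chi$, then follows from the strict positivity of $d_{\min}$ established in (iii).

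Item (iii) is the core step. Let $F_1,\dots,F_{q+1}$ be the images under $\elm_x$ of the fibers $C\times\{z\}$ of $p_2$ with $z\in\P^1(k)$. Since the second coordinate of $x$ has degree $\ge 2$, none of these fibers passes through $x$, so Proposition \ref{elemtransfoppties}(3) gives $F_i \equiv C_0-E+df$; moreover each rational point of $X$ lies on exactly one $F_i$, inherited from the corresponding property on $C\times\P^1$. For a nonzero section $s$ with divisor $S\equiv D$, let $n$ be the number of $F_i$ appearing as components of $S$. The residual divisor $S' = S-\sum_{i\in I}F_i$ is effective and numerically equivalent to $(a-n)(C_0-E)+(b-dn)f$. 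Effectiveness, combined with the effective-cone description recalled just before the proposition, forces $a-n\ge 0$ and $b-dn\ge 0$, i.e.\ $n\in[0,m]$. A short intersection calculation in $\Pic X$ gives $S'\cdot F_j = b-dn$ for $j\notin I$, so
\[
\#S(k) \le nN + (q+1-n)(b-dn),
\]
using that $\#F_i(k)=N$ since $F_i\cong C$. This bound is a convex quadratic in $n$ (leading coefficient $d>0$), so it is maximized on $[0,m]$ at an endpoint; subtracting from $(q+1)N$ and taking the smaller of the two endpoint values yields
\[
(q+1)N-\#S(k) \ge \min\bigl\{(q+1)(N-b),\,(q+1-m)(N-b+dm)\bigr\},
\]
which the hypotheses $b<N$ and $m<q+1$ force to be strictly positive.

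The main obstacle is the effective-cone claim used to restrict $n$ to $[0,m]$: one must argue that any effective divisor on $X$ numerically equivalent to $s(C_0-E)+tf$ has $s,t\ge 0$. This is the geometric input specific to $\elm_x(C\times\P^1)$, and is exactly what is carried out in the paragraph before the proposition, by pulling back along the blow-up and pushing forward through the contraction while tracking the multiplicity at $x$. Everything else is standard surface Riemann--Roch together with bookkeeping of intersection numbers in $\Pic X$.
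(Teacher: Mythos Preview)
Your proposal is correct and follows essentially the same approach as the paper: the length from Theorem~\ref{theo1}(5), the dimension bound via Riemann--Roch with $h^2=0$ from \cite[(V.2.4)]{hart}, and the minimum distance via the Hansen-type covering by the $q+1$ curves $F_i\equiv C_0-E+df$, together with the convexity argument over $n\in[0,m]$ and the effective-cone computation to bound $n$. The paper's proof is precisely the discussion preceding the proposition, and your write-up reproduces it faithfully.
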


\subsection{A second family}

In this section, we construct codes on another type of ruled surfaces, with negative invariants, that we have already considered in the proof of Lemma \ref{negseg}.

We choose a divisor $\delta$ on $C$ of positive degree, and we consider the surface $\pi : X\rightarrow C$, where $X:=\P(\O_C\oplus\O_C(-\delta))$. The surjection $\O_C\oplus\O_C(-\delta)\rightarrow \O_C(-\delta)$ induces a section of $\pi$ whose image we denote by $S$. This is the section with minimal self intersection number $-\deg\delta$, and we deduce from Lemma \ref{negseg} again that the Segre invariants are equal to $s_a(X)=s_g(X)=-\deg \delta$.

\begin{remark}
Note that this surface can easily be constructed as the image of the product surface by a sequence of elementary transforms.

Write the divisor $\delta=\delta_1-\delta_2$ as a difference of effective divisors. If we apply to $C\times\P^1$ and $\delta_2$ the procedure described at the beginning of the proof of Theorem \ref{elemtransfoseq} for some $z\in\P^1(k)$, we get the projective bundle associated to $\O_C\oplus\O_C(-\delta_2)$.

On this projective bundle, we consider the section associated to the surjection $\O_C\oplus\O_C(-\delta_2)\rightarrow \O_C$, and we perform the elementary transforms whose centers are the points of this section and its strict transforms, associated to the points in the support of $\delta_1$. In this way we get the projective bundle associated to $\O_C(-\delta_1)\oplus\O_C(-\delta_2)$ from Proposition \ref{elemtransfosheaf}.

From Theorem \ref{theo1} (4), this is the desired ruled surface over $C$.
\end{remark}

We follow the notations in \cite{hart} here, and set $e=\deg\delta$.

We choose the divisor $D$ in the numerical class $aS+bf$, and we consider the code $\C$ obtained by evaluating of the global sections of $D$ at the rational points of $X$.

Let us look at its parameters; the length remains unchanged from the above construction. To compute the dimension, we reason as above, with the help of Theorem \ref{theo1}. We get the Euler characteristic 
\[
\chi(\O_X(D))=(a+1)(b+1-g)-e\frac{a(a+1)}{2}
\]
which gives, reasoning the same way as in the preceding section, a lower bound for the dimension of the code when the evaluation map is injective.

To compute the minimum distance, we use Hansen's results once again. Here we choose the curves $F_1,\ldots,F_N$ to be the fibers of $\pi$ over the $N$ rational points of $C$. Note that each one is isomorphic to $\P^1$ and contains $q+1$ rational points.

We begin with a lemma that ensures us that in some cases, all sections of a given divisor contain the negative curve $S$ with some multiplicity. It will be useful to give a bound (better than the one we would obtain via a direct application of Hansen's results) on the maximal number of rational points of such sections.

\begin{lemma}
Let $D\equiv uS+vf$ be an effective divisor on $X$. Then we must have $v\geq 0$.

Assume that we have $v<ue$; if we set $i:=u-\lfloor\frac{v}{e}\rfloor$, then for any global section $T$ of $D$, we have $T\geq iS$.
\end{lemma}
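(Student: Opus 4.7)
The strategy is uniform: for any effective divisor $T \equiv uS + vf$ on $X$, decompose $T = mS + T'$ where $m := \mathrm{mult}_S(T) \geq 0$ and $T'$ is an effective divisor not containing $S$ as a component; then $T' \equiv (u-m)S + vf$, and I pin down the integer $m$ by intersecting $T'$ both with $f$ and with $S$. The first statement of the lemma is recovered by taking $T = D$, and the second by taking $T$ to be the zero divisor of a global section of $\O_X(D)$ (which is effective and numerically equivalent to $D$).

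Intersecting with a generic fiber gives an upper bound for $m$. Since $T'$ has only finitely many irreducible components, there is a point $p \in C(\overline{k})$ whose fiber $f_0 := \pi^{-1}(p)$ is not among them; then $f_0 \equiv f$ and $0 \leq T' \cdot f_0 = u - m$ by the intersection numbers $S \cdot f = 1$, $f^2 = 0$ of Theorem \ref{theo1} (3), so $m \leq u$. Intersecting with $S$ gives a lower bound for $m$: as $S \not\subset T'$ by construction, one has $T' \cdot S \geq 0$, which using $S^2 = \deg(\O_C \oplus \O_C(-\delta)) = -e$ and $S \cdot f = 1$ translates into $-(u - m)e + v \geq 0$, i.e.\ $v \geq (u-m)e$. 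Combined with $u - m \geq 0$ and $e > 0$, this already proves the first statement $v \geq 0$ upon taking $T = D$.

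For the second statement, the hypothesis $v < ue$ together with $v \geq (u-m)e$ yields $u - m \leq v/e < u$, hence $m \geq u - v/e$. Since $m$ is an integer, I may round up: $m \geq \lceil u - v/e \rceil = u - \lfloor v/e \rfloor = i$, so that $T = mS + T' \geq iS$, as required. The only mildly delicate step is the integer inequality $m \geq i$, which rests on the elementary identity $\lceil u - v/e \rceil = u - \lfloor v/e \rfloor$ (an easy case distinction on whether $e$ divides $v$); everything else is a direct bookkeeping with the intersection product on $X$ described in Theorem \ref{theo1} (3).
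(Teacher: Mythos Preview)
Your proof is correct and uses the same underlying observation as the paper: an irreducible curve $S$ not contained in an effective divisor $T'$ must satisfy $S\cdot T'\geq 0$. The paper applies this iteratively (peeling off one copy of $S$ at a time until the intersection with $S$ becomes nonnegative), while you extract the full multiplicity $m=\mathrm{mult}_S(T)$ at once and bound it directly via $S\cdot T'\geq 0$; this is a mild streamlining of the same argument. Your treatment of the first statement is also self-contained where the paper simply cites \cite[(V.2.20)]{hart}. One small simplification: from $(u-m)e\leq v$ and $u-m\in\Z$ you get $u-m\leq\lfloor v/e\rfloor$ immediately, so the detour through $\lceil u-v/e\rceil$ and the case distinction on divisibility is unnecessary.
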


\begin{proof}
The first assertion comes directly from \cite[(V.2.20)]{hart}.

To show the last one, recall the following fact: if on a surface we have two divisors $C$ an irreducible curve and $D$ effective such that $C\cdot D<0$ then $C$ is a fixed component of the linear system $|D|$.

Since we assumed $v<ue$, we have $D\cdot S=v-ue<0$, and $S$ is a fixed component of the linear system $|D|$. We continue in the same fashion replacing $D$ by $D-S$ until the intersection product becomes nonnegative. We conlude that $S$ is a fixed component of the linear system $|D|$ with multiplicity $i$. This is the desired result.
\end{proof}

We are ready to analyse the geometry of the global sections of the divisor $D$. We will begin by fixing the number of fibers $F_i$ such a section contains, and deduce the multiplicity of the negative curve $S$ in this section accordingly. This gives an upper bound on the number of points depending on the above number of fibers.

Let $T\in |D|$, seen as a Weil divisor. Assume that it contains exactly $t$ of the fibers $F_i$. We have $T=U+F_{i_1}+\cdots+F_{i_t}$ where $U$ is an effective divisor, numerically equivalent to $aC_0+(b-t)f$. Note that we must have $b-t\geq 0$.

We apply the above lemma to the divisor $U$: 
\begin{itemize}
	\item[(i)] if $b-t < ae$, we can write $U=iS+V$ with $i=a-\left\lfloor\frac{b-t}{e}\right\rfloor$, and $V$ is a global section of (some divisor numerically equivalent to) $(a-i)C_0+(b-t)f=\left\lfloor\frac{b-t}{e}\right\rfloor C_0+(b-t)f$. The section $V$ meets each of the $N-t$ remaining fibers $F_i$ in at most $\left\lfloor\frac{b-t}{e}\right\rfloor$ points since we have $(\left\lfloor\frac{b-t}{e}\right\rfloor C_0+(b-t)f)\cdot f=\left\lfloor\frac{b-t}{e}\right\rfloor$. Finally, the number of rational points of our section $T$ is at most 
\[
\# T(\F_q)\leq (q+1)t+(N-t)+\left\lfloor\frac{b-t}{e}\right\rfloor(N-t)
\]
where the first term comes from the $t$ fibers contained in the support of $T$, the second from the intersection points of the curve $S$ with the $N-t$ remaining fibers, and the last one from the intersections of $V$ with the remaining fibers.

\item[(ii)] if $b-t \geq ae$, the section $T$ meets each of the $N-t$ remaining fibers in at most $a$ points, and we get the upper bound
\[
\# T(\F_q)\leq (q+1)t+a(N-t)
\]
\end{itemize}

Let us summarize these results: the number of rational points of a global section $T\in|D|$ containing $t$ fibers satisfies

\begin{equation}
\label{ineq}
\# T(\F_q)  \leq   \left\{
\begin{array}{rcl}
qt+N+\left\lfloor\frac{b-t}{e}\right\rfloor(N-t) & \textrm{if} & t>b-ae \\
(q+1-a)t+aN & \textrm{if} & t\leq b-ae \\
\end{array}
\right.
\end{equation}

We want to give an upper bound for these numbers when $t$ varies. We treat separately two cases, depending on the sign of $b-ae$.

First if we have $b<ae$, the second case in (\ref{ineq}) cannot appear, and the maximal number of points of a section is given, when $t$ varies, by the piecewise affine function
\[
t\mapsto
\left\{
\begin{array}{rcl}
(q-\left\lfloor\frac{b}{e}\right\rfloor)t+(\left\lfloor\frac{b}{e}\right\rfloor+1)N & \textrm{over} & [0,b-\left\lfloor\frac{b}{e}\right\rfloor e]\\
(q-i)t+(i+1)N & \textrm{over} & ]b-(i+1)e,b-ie],~0\leq i\leq \left\lfloor\frac{b}{e}\right\rfloor-1\\
\end{array}
\right.
\]
We assume the inequality $b<qe$ in the following: else the above function reaches the value $(q+1)N$ when $i=q$, and the evaluation map is no longer injective.

In this case each piece has positive slope, and we just have to consider the values at the right of each interval (when $t=b-ie$). These values are a convex quadratic function of $i$, and their maximum is reached for $i=0$ or $\left\lfloor\frac{b}{e}\right\rfloor$; we get 
\[
\# T(\F_q)\leq \max\left\{\left(q-\left\lfloor\frac{b}{e}\right\rfloor\right)\left(b-\left\lfloor\frac{b}{e}\right\rfloor e\right)+\left(\left\lfloor\frac{b}{e}\right\rfloor+1\right)N,qb+N\right\} 
\]

In the second case $b\geq ae$, we assume $a\leq q$ else the bound in the second line of (\ref{ineq}) reaches the value $(q+1)N$ and the evaluation map is not injective. 

We get the following upper bound on the number of rational points of a global section as a function of the number $0\leq t\leq b$ of fibers it contains
\[
t\mapsto\left\{
\begin{array}{rcl}
(q+1-a)t+aN & \textrm{over} & [0,b-ae]\\
(q-i)t+(i+1)N & \textrm{over} & ]b-(i+1)e,b-ie],~0\leq i\leq a-1\\
\end{array}
\right.
\]
If we have $a=0$, we get at most $(q+1)b$ points. Assume $a$ is positive; reasoning as above, we see that we have to consider the values at $b-ae$ and the $b-ie$, $0\leq i\leq a-1$ in order to get the maximum. The last ones are the values of a (quadratic) convex function of $i$, and their maximum is attained for $i=0$ or $i=a-1$. Since the value for $i=a-1$ (or $t=b-ae+e$) is greater than the one for $t=b-ae$, we get
\[
\# T(\F_q)\leq \max\left\{(q+1-a)(b-ae+e)+aN,qb+N\right\} 
\]

Summarizing, we get the parameters of the codes in our second family

\begin{proposition}
\label{famcodes2}
Let $\pi:X\rightarrow C$ denote a ruled surface with Segre invariants $-e<0$. Let $S$ denote the negative section of $\pi$.

Assume that we have $0\leq a\leq q$, $0\leq b<N$ and $ae<2(b+1-g)$; then the parameters of the evaluation code associated to a divisor in the numerical class $aS+bf$ satisfy
\begin{itemize}
	\item[(i)] its length is $n=(q+1)N$;
		\item[(ii)] its dimension is at least $ (a+1)(b+1-g)-e\frac{a(a+1)}{2}$;
			\item[(iii)] its minimum distance is at least
	\[
	\left\{
\begin{array}{rcl}
\min\left((q+1-\left\lfloor\frac{b}{e}\right\rfloor)(N-b+\left\lfloor\frac{b}{e}\right\rfloor e),q(N-b)\right) & \textrm{if} & b<ae\\
\min\left((q+1-a)(N-b+(a-1)e),q(N-b)\right) & \textrm{if} & b\geq ae,~a\neq 0\\
(q+1)(N-b) & \textrm{if} & a= 0\\
\end{array}
\right.
	\].
\end{itemize}
\end{proposition}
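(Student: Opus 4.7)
The proof proceeds in three stages, mirroring the three assertions in the statement.

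First, for the length, I would invoke Theorem \ref{theo1} (5): the zeta function of $X$ factors as $Z(C,t)Z(C,qt)$, which by the argument given in its proof means $\#X(\F_q)=(q+1)\#C(\F_q)=(q+1)N$. So the length is automatic.

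For the dimension, the plan is to apply the Riemann-Roch theorem on $X$. Using the description of $K_X$ given by Theorem \ref{theo1} (6) specialized to $\E=\O_C\oplus \O_C(-\delta)$ (so $\det \E=\O_C(-\delta)$ and $S^2=-e$), a direct computation of $\tfrac{1}{2}D\cdot(D-K_X)+\chi(\O_X)$ for $D\equiv aS+bf$ yields the claimed Euler characteristic $(a+1)(b+1-g)-e\tfrac{a(a+1)}{2}$. To deduce that this is a lower bound on $h^0(\O_X(D))$, I would observe that $h^2(\O_X(D))=h^2(\pi_\ast \O_X(D))$ using \cite[(V.2.4)]{hart} (valid since $a\geq 0$), and that this vanishes because the pushforward is a sheaf on the curve $C$. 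Once the evaluation map is injective (guaranteed by the minimum distance being positive, which in turn comes from (iii)), this bound on $h^0$ passes to the code dimension.

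For the minimum distance, the plan is to follow the preceding discussion in the text carefully. I cover the rational points of $X$ by the $N$ fibers $F_1,\ldots,F_N$ above the rational points of $C$, each containing $q+1$ rational points. For a global section $T$ of $D$ containing exactly $t$ of these fibers, I write $T=U+F_{i_1}+\cdots+F_{i_t}$ with $U$ effective in the numerical class $aS+(b-t)f$, so $b-t\geq 0$ by the effectivity constraint recalled in the first paragraph of the lemma preceding the proposition. Then two subcases arise: when $b-t<ae$, the lemma forces $U=iS+V$ with $i=a-\lfloor(b-t)/e\rfloor$, and $V\cdot f=\lfloor(b-t)/e\rfloor$ gives the refined upper bound; when $b-t\geq ae$, the bound $U\cdot f=a$ applies directly. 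This yields the piecewise upper bound displayed in (\ref{ineq}). The main obstacle will then be the optimization of this piecewise affine function of $t\in[0,b]$.

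To finish, I split according to the sign of $b-ae$. When $b<ae$, only the first regime appears, giving a piecewise affine function with positive slope on each piece, so the maxima occur at the right endpoints $t=b-ie$; these endpoint values form a convex quadratic in $i$, whose maximum over $0\leq i\leq \lfloor b/e\rfloor$ is attained at the endpoints $i=0$ or $i=\lfloor b/e\rfloor$, yielding the first branch of (iii). When $b\geq ae$ and $a\neq 0$, both regimes contribute, and a similar convexity argument over $0\leq i\leq a-1$ together with comparison against the value at $t=b-ae$ gives the second branch. The trivial case $a=0$ leaves only the linear bound $(q+1)b$, producing the third branch. The hypotheses $a\leq q$ and $b<qe$ (respectively $b<N$) are exactly what is needed to ensure the evaluation map is injective, justifying the use of the Euler characteristic bound in (ii).
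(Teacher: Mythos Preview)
Your proposal is correct and follows essentially the same approach as the paper: the length via Theorem \ref{theo1} (5), the dimension via Riemann--Roch and the vanishing of $h^2$ through the pushforward to $C$, and the minimum distance via the fiber-covering argument combined with the lemma on the forced multiplicity of $S$, followed by the piecewise-affine optimization in $t$ using convexity at the endpoints. The only minor remark is that the condition $b<qe$ you mention in the first case is not an extra hypothesis but follows automatically from $b<ae$ and $a\leq q$.
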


\subsection{Unisecant codes}

We do not make any hypothesis on the ruled surface $\P(\E)$ in this section. We consider the case $a=1$, which allows us to describe precisely the possible decompositions into irreducible components of the global sections of $D$. As a consequence, we get better bounds for the parameters of these particular codes than with Hansen's method.

\begin{definition}
The code $\C$ obtained by evaluating the global sections of $D\equiv a\O_{\P(\E)}(1)+bf$ at the points of $\P(\E)(\F_q)$ is a \emph{unisecant code} when we have $a=1$.
\end{definition}

The parameters of such a code are rather easy to describe

\begin{proposition}
\label{unicode}
Let $\L\in\Pic C$ denote an invertible sheaf; we consider the code defined by the evaluation of the global sections of a divisor $D\sim\O_{\P(\E)}(1)+\pi^\ast \L$ at the points of $\P(\E)(k)$.

Recall that $s_a(\E)$ is the arithmetic Segre invariant of $\E$. The parameters $[n,k,d]$ of this code satisfy $n=(q+1)N$ and
\[
k\geq \deg\E+2(\deg \L+1-g),~d\geq q(N-(\deg\E-s_a(\E))/2-\deg\L)
\]
when both right hand sides of the inequalities are positive.
\end{proposition}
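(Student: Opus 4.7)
The plan is to establish the three parameters separately. The length is immediate from Theorem \ref{theo1}(5): $n = \#\P(\E)(k) = (q+1)N$. For the dimension, I would reduce to the base curve via the projection formula. Since $\pi_\ast \O_{\P(\E)}(1) = \E$ and $R^1\pi_\ast \O_{\P(\E)}(1) = 0$, we get $\pi_\ast \O_X(D) = \E\otimes \L$ with vanishing $R^1$, so the Leray spectral sequence gives $H^i(\P(\E), D) \simeq H^i(C, \E\otimes \L)$. Riemann--Roch on $C$ then yields $\chi(\E\otimes \L) = \deg\E + 2\deg\L + 2(1-g)$, which is a lower bound for $h^0(D)$ and, under the positivity assumptions (which force injectivity of the evaluation map), for the dimension of the code.

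For the minimum distance, the crucial structural observation is the unisecant property: since $D\cdot f = 1$, every effective divisor $T\sim D$ meets the generic fiber in a single point, so $T$ possesses a unique horizontal irreducible component. Concretely, I would decompose $T = T_0 + \pi^\ast D_0$ where $T_0$ is the image of a section of $\pi$ and $D_0$ is an effective divisor on $C$. Because $T$ is defined over $k$ and its horizontal component is unique, Galois invariance forces both $T_0$ and $D_0$ to be defined over $k$.

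Next I would translate this decomposition into numerical information to control $\deg D_0$. The section giving $T_0$ corresponds to a quotient $\E \twoheadrightarrow \M$, and as in the proof of Lemma \ref{minself} one has $T_0 \sim \O_{\P(\E)}(1) + \pi^\ast(\M - \det\E)$. Comparing with $T \sim \O_{\P(\E)}(1) + \pi^\ast \L$ gives $\deg D_0 = \deg\L + \deg\E - \deg\M$. By the very definition of the arithmetic Segre invariant, the maximal degree of a $k$-rational invertible subsheaf of $\E$ is $(\deg\E - s_a(\E))/2$, equivalently the minimal degree of a quotient line bundle defined over $k$ is $(\deg\E + s_a(\E))/2$; hence $\deg D_0 \leq \deg\L + (\deg\E - s_a(\E))/2$.

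Finally I would count rational points. The component $T_0$ is isomorphic to $C$ over $k$ and so carries at most $N$ rational points. Each degree-one closed point in $\mathrm{supp}(D_0)$ yields a fiber isomorphic to $\P^1_k$, containing $q+1$ rational points of which exactly one (the image of that base point under $T_0 \simeq C$) is already counted on $T_0$, giving $q$ new points; closed points of higher degree in $\mathrm{supp}(D_0)$ contribute no rational points to their fibers. Therefore $\#T(k)\leq N + q\deg D_0 \leq N + q\bigl(\deg\L + (\deg\E - s_a(\E))/2\bigr)$, and the announced bound on $d$ follows by subtracting from $(q+1)N$. The main delicate point I anticipate is the rational-point bookkeeping of this last step --- specifically verifying that the unique horizontal component is defined over $k$ and that no rational points are created on fibers above non-rational closed points --- together with the care needed to invoke the arithmetic (rather than geometric) Segre invariant in the bound on $\deg\M$.
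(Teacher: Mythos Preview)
Your proposal is correct and follows essentially the same route as the paper: both decompose an effective $T\sim D$ into a unique horizontal section plus vertical fibers, bound the number of fibers via the arithmetic Segre invariant (you use the minimal quotient degree, the paper the equivalent self-intersection formulation from Lemma~\ref{minself}), and count rational points as $N+qu$. The only cosmetic difference is in the dimension step, where you push forward to $C$ and apply Riemann--Roch to $\E\otimes\L$, while the paper applies Riemann--Roch directly on the surface using the canonical divisor from Theorem~\ref{theo1}(6); both yield the same Euler characteristic.
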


\begin{proof}
The length comes from Theorem \ref{theo1} (5). To estimate the dimension, we first use the Riemann-Roch theorem on $\P(\E)$. From the description of the canonical divisor on $\P(\E)$ in the same theorem, we get the Euler characteristic of $D$ (note that we have $\chi(\O_X)=1-g$)
\[
\chi(D)=\frac{1}{2}D\cdot(D-K_{\P(\E)})+\chi(\O_X)=\deg\E+2\deg \L+2-2g
\]
Reasoning as in the preceding cases, the dimension of $H^0(X,D)$ is at least the above number. This will give us a lower bound for the dimension of the code if we show that the evaluation map is injective.

We turn to the minimum distance. We use a different strategy here: for a given global section, we determine its possible decompositions in irreducible components. These ones are rather simple since they must be fibers, or images of sections of $\pi$; as a consequence we can give a bound for their numbers of rational points, which is exactly what we need for the determination of the minimum distance.

Consider a global section of $D$. If it is irreducible, the restriction to it of the morphism $\pi$ has degree one, and it is isomorphic to the curve $C$: it contains $N$ points. Else from \cite[V.2.20$\&$21]{hart}, we write it as the union of an irreducible curve $C'\sim\O_{\P(\E)}(1)+\pi^\ast \L'$ (which must be isomorphic to $C$ for the same reason) and $u$ fibers: then it contains at most $N+qu$ points (when all the fibers are above rational points of $C$). The numerical equivalence class of $C'$ is $\O_{\P(\E)}(1)+(\deg \L-u)f$, and its self intersection is $2(\deg\L-u)+\deg\E$. Since this number is at least $s_a(\E)$, we get the inequality $u\leq (\deg\E-s_a(\E))/2+\deg\L$, and finally the minimum distance is at least $q(N-(\deg\E-s_a(\E))/2-\deg\L)$.
\end{proof}

\begin{remark}
One can verify that the bounds on the parameters given in the preceding result are better than the ones we would obtain by using Hansen's ideas. The reason is that in the particular case of unisecant codes, we can describe precisely the possible decompositions of the global sections into irreducible components. 

It would be desirable to extend this to more general codes on a ruled surface $X$; but this would require determining the numerical classes of irreducible curves on $X$, and this is already a difficult question over an algebraically closed field of positive characteristic (see for instance \cite[Exercises V.2.14 and 15]{hart}).
\end{remark}

We deduce from this result a funny application of a result from coding theory, the Griesmer bound. It says that the arithmetic Segre invariant has something to do with the number of rational points: when a curve has many rational points, the arithmetic Segre invariants of the locally free sheaves of rank two on $C$ cannot be too high.

\begin{proposition}
\label{segpts}
Let $t\geq 0$ denote an integer. Assume that the number of rational points of the base curve satisfies 
\[
N> \max\{(t+1)(q^2+1),t(q^2+q+1)\}
\]
then for any locally free sheaf $\E$ of rank two on $C$, we have $s_a(\E)<2(g-t)$.
\end{proposition}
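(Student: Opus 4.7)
The plan is to prove the contrapositive via the Griesmer bound: assuming some rank two locally free sheaf $\E$ on $C$ satisfies $s := s_a(\E) \geq 2(g-t)$, I build unisecant codes from Proposition \ref{unicode} whose parameters, by Griesmer, force $N \leq \max\{(t+1)(q^2+1), t(q^2+q+1)\}$, contradicting the hypothesis. The two expressions in the $\max$ arise from the two possible parities of $s$, equivalently of $\deg\E$ by Remark (ii) after the definition of the Segre invariants.

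Set $e := \deg\E$. If $e$ is even, I would pick $\L \in \Pic C$ of degree $g + 1 - e/2$, so that Proposition \ref{unicode} produces a code of length $(q+1)N$, dimension at least $4$, and (using $s/2 \geq g-t$) minimum distance at least $q(N-t-1)$. Griesmer's bound $n \geq \sum_{i=0}^{3}\lceil d/q^i\rceil$ has its first two summands combine to $(q+1)(N-t-1)$; subtracting from $(q+1)N$ leaves $(q+1)(t+1) \geq \lceil (N-t-1)/q\rceil+\lceil(N-t-1)/q^2\rceil \geq (N-t-1)(q+1)/q^2$, which rearranges to $N \leq (t+1)(q^2+1)$.

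If $e$ is odd, then $s$ is odd and in fact $s \geq 2g-2t+1$. Taking $\deg\L = g + (1-e)/2$ yields a code of dimension at least $3$ and distance at least $q(N-t)$; Griesmer with $k=3$ shrinks to $t(q+1) \geq \lceil(N-t)/q\rceil \geq (N-t)/q$, giving $N \leq t(q^2+q+1)$. The degenerate ranges where these lower bounds on $d$ are nonpositive (i.e.\ $N \leq t+1$ or $N \leq t$) require no separate treatment, since the target bound is already at least $t+1$.

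The main subtlety is pinning down the right value of $\deg\L$. Enlarging it raises the guaranteed dimension (adding terms to the Griesmer sum) but simultaneously lowers the guaranteed $d$; in the parameter range at stake this trade-off strictly weakens the resulting bound on $N$ (for example, for $s=2g-4$ the naive choice $k=6$ yields only $N \leq 4q^2 + O(1)$, which is weaker than the $k=4$ bound $N \leq 3(q^2+1)$ whenever $q \geq 2$). The ``sweet spots'' are $k=4$ for even $e$ and $k=3$ for odd $e$, and exactly these two produce the two terms in the $\max$. Recognizing that the parity of $\deg\E$ is the correct organizing principle, and then choosing $\deg\L$ to hit the minimum dimension consistent with that parity, is the conceptual heart of the argument.
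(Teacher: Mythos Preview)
Your proof is correct and follows essentially the same approach as the paper: argue by contrapositive, split into the even and odd parities of $\deg\E$, choose $\deg\L$ so that $\deg\E+2\deg\L$ equals $2g+2$ (even case, giving $k\geq 4$, $d\geq q(N-t-1)$) or $2g+1$ (odd case, giving $k\geq 3$, $d\geq q(N-t)$), and apply the Griesmer bound to extract the two inequalities in the $\max$. Your added discussion of why these choices of $\deg\L$ are optimal is a nice complement to the paper's more terse presentation.
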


\begin{proof}
Assume that the inequality on the number of points is satisfied, and that there exists some locally free sheaf $\E$ of rank two on $C$ with arithmetic Segre invariant $s_a(\E)\geq 2g-2t$.

We first assume that $\deg\E$ is even, and choose some $\L\in \Pic(C)$ with $\deg\E+2\deg\L=2g+2$. Then the minimum distance of the code constructed in Proposition \ref{unicode} is at least $q(N-t-1)$, and its dimension at least $4$ since $C$ contains more than $t+2$ rational points. Thus we deduce the following inequality from the Griesmer bound
\[
(q+1)N\geq q(N-t-1)+N-t-1+\left\lceil\frac{N-t-1}{q}\right\rceil+\left\lceil\frac{N-t-1}{q^2}\right\rceil
\]
We can remove the ceilings, and we get the inequality $N\leq (t+1)(q^2+1)$ that contradicts our assumption on the number of points.

When $\deg\E$ is odd, its arithmetic Segre invariant is at least $2g-2t+1$ from our hypothesis, and we choose some $\L\in \Pic(C)$ with $\deg\E+2\deg\L=2g+1$; then we reason the same way as in the even case. This time we get the inequality $N\leq t(q^2+q+1)$ from the Griesmer bound, violating once again our assumption.
\end{proof}

\section{Asymptotically good families}

%{\bf Question :} peut-on faire des codes de sections asymtpotiques ? A priori les codes de sections ont de meilleurs paramètres que ceux qu'on a calculés; un autre argument en leur faveur est qu'ils ont une toute petite localité (elle vaut $2$ puisque leurs restrictions aux fibres sont des polynômes linéaires). On peut même rêver à des codes optimaux puisqu'on arrive à dépasser la borne de Griesmer. Cela va nous amener à considérer des invariants de Segre $s_a$ asymptotiques ? 

Our aim here is to construct asymptotically good families of codes on ruled surfaces, whose parameters are better than the ones of the codes on the product $C\times \P^1$. 

We first consider evaluation codes on the product, and estimate their parameters. This could be done by setting $e=0$ in the second family of codes constructed above, but it is easy to give an alternative and more precise description.

We denote by $Y=\P(\O_C\oplus\O_C)=C\times \P^1$ the product ruled surface. Recall that we have denoted the two projections of $Y$ on its factors by $p_1$ and $p_2$ respectively. Since we have $\O_Y(a)=p_2^\ast\O_{\P^1}(a)$ for any integer $a$, the following result is an immediate consequence of the Künneth formula for the global sections of a sheaf on a product of varieties. In the case $C=\P^1$, this has already been observed \cite[Theorem 2.1]{codu}.

\begin{proposition}
Let $D=\O_Y(a)+p_1^\ast \beta$ denote a divisor on $Y$ with $\beta\in \Pic C$ and $a,\deg \beta\geq 0$. 

We denote by $\PRS(a)$ the projective Reed-Solomon code (also called doubly extended) on $\P^1$ associated to homogeneous polynomials of degree $a$, and by $\C_C(\beta)$ the code obtained by evaluating the global sections of $\beta$ at the points of $C(k)$.

Then the code $\C_Y(D)$ obtained by evaluating the global sections of $D$ at the points of $Y(k)$ is the tensor product $\PRS(a)\otimes \C_C(\beta)$. 
\end{proposition}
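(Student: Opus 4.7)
The plan is to identify the space of global sections of $D$ with a tensor product via the Künneth formula, then show that the evaluation map at the rational points of $Y$ factors as the tensor product of the two component evaluation maps.

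First I would rewrite the divisor's associated sheaf in product form. Since $\O_Y(a) = p_2^\ast\O_{\P^1}(a)$ by the hypothesis recalled in the paragraph, we have
\[
\O_Y(D)\simeq p_1^\ast\O_C(\beta)\otimes p_2^\ast\O_{\P^1}(a).
\]
Both $C$ and $\P^1$ are proper over $k$, and $\O_C(\beta)$ and $\O_{\P^1}(a)$ are coherent, so the Künneth formula (in its basic form for products of proper schemes over a field, applied only in degree zero) gives a canonical isomorphism
\[
H^0\bigl(Y,\O_Y(D)\bigr)\simeq H^0\bigl(C,\O_C(\beta)\bigr)\otimes_k H^0\bigl(\P^1,\O_{\P^1}(a)\bigr),
\]
under which a pure tensor $g\otimes f$ corresponds to the section $p_1^\ast g\cdot p_2^\ast f$.

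Next I would identify the evaluation points. Since $Y=C\times\P^1$, we have $Y(k)=C(k)\times\P^1(k)$ set-theoretically, and for a point $y=(c,p)\in Y(k)$ the pullback $p_1^\ast g\cdot p_2^\ast f$ evaluates to the product $g(c)f(p)$. Thus the evaluation map
\[
\mathrm{ev}_Y : H^0(Y,\O_Y(D))\longrightarrow k^{Y(k)}
\]
coincides, through the Künneth isomorphism and the identification $k^{Y(k)}\simeq k^{C(k)}\otimes_k k^{\P^1(k)}$, with the tensor product $\mathrm{ev}_C\otimes\mathrm{ev}_{\P^1}$ of the individual evaluation maps.

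Finally, by definition the tensor product code $\PRS(a)\otimes\C_C(\beta)$ is the image of $\mathrm{ev}_{\P^1}\otimes\mathrm{ev}_C$ (equivalently, the $k$-span of all coordinate-wise products $(g(c)f(p))_{(c,p)}$ with $g\in H^0(C,\beta)$ and $f\in H^0(\P^1,\O(a))$). The calculation above shows this is exactly the image of $\mathrm{ev}_Y$, that is, $\C_Y(D)$. The main thing to be careful about is purely bookkeeping: making sure the indexing of the tensor product of codes matches the $C(k)\times\P^1(k)$ indexing of $Y(k)$, but this is automatic from the identification $p_1\times p_2:Y(k)\xrightarrow{\sim}C(k)\times\P^1(k)$.
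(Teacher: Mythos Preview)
Your proposal is correct and follows exactly the approach the paper indicates: the paper simply asserts the result as ``an immediate consequence of the K\"unneth formula for the global sections of a sheaf on a product of varieties'' (together with the identity $\O_Y(a)=p_2^\ast\O_{\P^1}(a)$), and you have filled in precisely those details, including the factorisation of the evaluation map through $k^{C(k)}\otimes k^{\P^1(k)}$.
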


It is well-known that the parameters of the Reed-Solomon code are $[q+1,a+1,q+1-a]$, and the ones of the code on the curve are $[N,\geq b+1-g,N-b]$. As a consequence the paramaters of the code $\C_Y(D)$ are bounded below by the products $[(q+1)N,\geq (a+1)(b+1-g),(q+1-a)(N-b)]$.

We now consider sequences of codes whose length grows to infinity. The following definition will be useful when we want to compare their parameters.

\begin{definition}
If $\C$ is an $[n,k,d]$-linear code, we define its \emph{rate} by $R:=\frac{k}{n}$, and its \emph{relative distance} by $\delta:=\frac{d}{n}$.

A sequence $(\C_i)_{i\geq 0}$ of linear codes, with parameters $[n_i,k_i,d_i]$ is \emph{asymptotically good} when we have $\lim n_i=+\infty$, and the limits of the rates $\liminf R_i$, $\liminf\delta_i$ are positive. 
\end{definition}

Recall that $k=\F_q$. We fix once and for all an asymptotically optimal sequence of curves $(C_i)_{i\geq 0}$; recall that if $C_i$ has genus $g_i\rightarrow \infty$ and $N_i=\#C_i(k)$ rational points, this means that the quotients $N_i/g_i$ tend to $A(q)$ which is the highest possible limit for such a sequence.

We first consider a sequence of codes $\C_i$ on the products $C_i\times \P^1$. We fix some $1\leq a\leq q$ and for each $i\geq 0$ some divisor $\beta_i$ on $C_i$ with degree $b_i$. Then we consider the code $\C_i$ obtained by evaluating the global sections of the divisor $\O_{\P(\O_C\oplus\O_C)}(a)+p_1^\ast \beta_i$ at the points of $C_i\times \P^1(k)$. From the above proposition, its relative parameters are the products of the relative parameters of the Reed-Solomon code (which is MDS) and of the code on the curve. We get 
\[
R_i=\frac{a+1}{q+1}\frac{b_i+1-g_i}{N_i},~\delta_i=\left(1-\frac{a}{q+1}\right)\left(1-\frac{b_i}{N_i}\right)
\]
We let $i\rightarrow\infty$, and we assume that the quotients $\frac{b_i}{N_i}$ converge to a limit $\frac{1}{A(q)}<b<1$. Then the relative parameters converge to a limit $(\delta,R)$ with $\delta$ and $R$ positive, and lying on the line
\[
tx+   \left(1+\frac{1}{q+1}-t\right) y  =t\left(1+\frac{1}{q+1}-t\right)\left(1-\frac{1}{A(q)}\right),~t=\frac{a+1}{q+1}
\]  
Thus the relative parameters of the best product codes lie on or above these $q$ lines. In order to have a clearer view, we consider the parameter $t$ as a real number varying in the interval $[0,1]$. Then we replace the familiy of lines above by their envelope, and we conclude that the relative parameters of the best product codes lie on or above the curve parametrised by 
\[
x(t)=Bt^2,~y(t)=B(1-t)^2,~0\leq t \leq 1,~B=\left(1-\frac{1}{A(q)}\right)\left(1+\frac{1}{q+1}\right)
\]
which is the blue curve on the diagrams below.

We now construct asymptotic families of codes on ruled surfaces. We start from a sequence of curves $(C_i)$ as above, and for each one we define
\begin{itemize}
\item[(i)] a point $x_i$ of degree $d_i$ on $C_i\times \P^1$, whose image on $C_i$ has degree $d_i$, and on $\P^1$ has degree $>1$;
\item[(ii)] the ruled surface $\pi_i:X_i:=\elm_{x_i}(C_i\times\P^1)\rightarrow C_i$ obtained as the image of the elementary transform of $C_i\times \P^1$ with center $x_i$;
\item[(iii)] a divisor $D_i\sim a_i(\O_{\P(\O_C\oplus\O_C)}(1)-E_i)+\pi_i^\ast \beta_i$ on $X_i$, with $1\leq a_i\leq q$ and $\deg \beta_i=b_i\leq N_i$.
\end{itemize}

Then the code $\C_i$ is the code obtained by the evaluation of the global sections of the divisor $D_i$ at the rational points of the surface $X_i$.

It follows from Proposition \ref{famcodes1} that if we set $m_i=\min\{a_i,\lfloor\frac{b_i}{d_i}\rfloor\}$, the relative parameters of the code $\C_i$ are 
\begin{eqnarray*}
\delta_i & \geq & \min\left\{1-\frac{b_i}{N_i},\left(1-\frac{m_i}{q+1}\right)\left(1-\frac{b_i}{N_i}+\frac{d_i}{N_i}m_i\right)\right\},\\
R_i & \geq & \frac{a_i+1}{q+1}\frac{b_i+1-g_i-\frac{1}{2}a_id_i}{N_i}
\end{eqnarray*}

\begin{remark}
We consider codes from the first family since their parameters are slightly better than the ones from the second family.
\end{remark}

We assume in the following that the sequences $(\frac{b_i}{N_i})$ and $(\frac{d_i}{N_i})$ converge respectively to $b$ and $d$. We also assume that the limit $a=\lim \frac{a_i}{q+1}$ exists, and we treat it as a real number in the interval $[0,1]$. Recall that the sequence $(\frac{N_i}{g_i})$ converges to $A(q)$ from our hypothesis that the family of base curves is asymptotically optimal. Note that the sequence $(\frac{m_i}{q+1})$ tends to $m=\min\{a,\frac{1}{q+1}\lfloor\frac{b}{d}\rfloor\}$.

The sequences of relative parameters $\delta_i,R_i$ of the codes $\C_i$ tend respectively to the values $\min\left\{1-b,(1-m)(1-b+(q+1)md)\right\}$ and
\[
 \left(a+\frac{1}{q+1}\right)\left(b-\frac{1}{A(q)}-\frac{1}{2}(q+1)ad\right)
\]

We restrict our attention to the domain where $m=a$ and $1-b\leq (1-m)(1-b+(q+1)md)$. On one hand, this will be sufficient to construct codes asymptotically better then the product codes, and on the other hand numerical experiments show that the best relative parameters are attained in this domain.

We verify that $m=a$ when $d\leq\frac{b}{(q+1)a}$, and in this case the inequality $1-b\leq (1-m)(1-b+(q+1)md)$ is equivalent to $d\geq\frac{1-b}{(q+1)(1-a)}$. Note that we must have $a\leq b$ for the interval containing $d$ to be non empty. 

Under the above hypotheses, the relative minimum distance tends to $1-b$. Thus we fix $b$ and try to get the highest value for the limit of the rates given above. This is a decreasing function of the variable $d$, and we set $d=\frac{1-b}{(q+1)(1-a)}$. Now the function $R$ is maximal for the following value of $a$
\[
a_0=1-\sqrt{\frac{(q+2)A(q)(1-b)}{(q+1)(A(q)(b+1)-2)}}
\]
One checks easily that this is well-defined as soon as $A(q)> 2$ and that we have $a_0\leq b$ as necessary. Now the maximal rate is given by 
\[
1-\frac{1}{A(q)} +\frac{A(q)(b+1)-2}{2(q+1)A(q)}-\sqrt{\frac{(q+2)A(q)(1-b)}{(q+1)(A(q)(b+1)-2)}}
\]
Making $b$ vary gives us a new curve in the $(\delta,R)$ plane, that we draw in red in the figures below.

\begin{center}
\label{diagcodes}
\begin{multicols}{2}

\begin{tikzpicture}[x=3.0cm,y=3.0cm]
\draw [dash pattern=on .5pt off .5pt, xstep=0.3cm,ystep=0.3cm, lightgray] (0,0) grid (1.09,1.09);
\draw[->] (-0.09,0) -- (1.19,0);
\foreach \x in {0.5,1}
\draw[shift={(\x,0)}] (0pt,2pt) -- (0pt,-2pt) node[below] {\footnotesize $ \x$};
\draw[->] (0,-0.09) -- (0,1.19);
\foreach \y in {0.5,1}
\draw[shift={(0,\y)}] (2pt,0pt) -- (-2pt,0pt) node[left] {\footnotesize $\y$};
\draw[shift={(0,1.19)}] node[left] {\footnotesize $R$};
\draw[shift={(1.19,0)}] node[below] {\footnotesize $\delta$};
\draw[shift={(-.04,0)}] node[below] {\footnotesize $0$};
%\clip(-0.49,-0.52) rectangle (2.32,1);
\draw[smooth,samples=100,domain=0:1] plot[parametric] function{t*t,(1-t)*(1-t)};
%Limite des codes produits
\draw[blue,smooth,samples=100,domain=0:1] plot ({36/51*\x*\x},{36/51*(1-\x)*(1-\x)});
\foreach \q in {16,49}
%\draw[smooth,samples=100,domain=1:sqrt(\q)-1] plot ({((2*\q +3)/2+\x/2-\q-2-sqrt((sqrt(\q)+\x-3)*(\q+1)*(\q+2)*(sqrt(\q)-1-\x)))/((\q+1)*(sqrt(\q)-1))},{1-(\q*\x+sqrt(\q)-1)/((\q+1)*(sqrt(\q)-1))});
%Limite des codes réglés invariant negatif
%\draw[red,smooth,samples=100,domain=1:3] plot ({(16*(3-\x)/51},{(105/2+\x/2-18-sqrt((1+\x)*17*18*(3-\x)))/51});
%Limite des codes réglés transfo élémentaire
\draw[red,smooth,samples=100,domain=0.3:1] plot ({1-\x},{2/3+(\x+1)/34-1/51-2*sqrt((1/3-1/2-1/2*\x)*(\x-1)*18/34)});
%Limite des codes réglés transfo élémentaire version 2
%\draw[red,smooth,samples=100,domain=0.3:1] plot ({1-\x},{2/3+(3*\x +1)/102-sqrt((1-\x)*(\x +1/3)*18/17)});
%GilbertVarshamov
\draw[green,smooth,samples=100,domain=0.01:0.95] plot ({\x},{1-(\x*ln(15)/ln(16)-\x*ln(\x)/ln(16)-(1-\x)*ln(1-\x)/ln(16)});
\draw[shift={(0.65,-0.25)}] node[left] {\footnotesize $q=16$};
\end{tikzpicture} \\

\begin{tikzpicture}[x=3.0cm,y=3.0cm]
\draw [dash pattern=on .5pt off .5pt, xstep=0.3cm,ystep=0.3cm, lightgray] (0,0) grid (1.09,1.09);
\draw[->] (-0.09,0) -- (1.19,0);
\foreach \x in {0.5,1}
\draw[shift={(\x,0)}] (0pt,2pt) -- (0pt,-2pt) node[below] {\footnotesize $ \x$};
\draw[->] (0,-0.09) -- (0,1.19);
\foreach \y in {0.5,1}
\draw[shift={(0,\y)}] (2pt,0pt) -- (-2pt,0pt) node[left] {\footnotesize $\y$};
\draw[shift={(0,1.19)}] node[left] {\footnotesize $R$};
\draw[shift={(1.19,0)}] node[below] {\footnotesize $\delta$};
\draw[shift={(-.04,0)}] node[below] {\footnotesize $0$};
%\clip(-0.49,-0.52) rectangle (2.32,1);
\draw[smooth,samples=100,domain=0:1] plot[parametric] function{t*t,(1-t)*(1-t)};
\draw[blue,smooth,samples=100,domain=0:1] plot ({49/60*\x*\x},{49/60*(1-\x)*(1-\x)});
%\draw[red,smooth,samples=100,domain=1:6] plot ({(49*(6-\x)/300},{252/300+\x/600-sqrt((4+\x)*51/300*(6-\x)/6)});
\draw[red,smooth,samples=100,domain=0.3:1] plot ({1-\x},{5/6+(3*\x+2)/300-sqrt((2/3+\x)*(1-\x)*51/50)});
\draw[green,smooth,samples=100,domain=0.01:0.99] plot ({\x},{1-(\x*ln(48)/ln(49)-\x*ln(\x)/ln(49)-(1-\x)*ln(1-\x)/ln(49)});
\draw[shift={(0.65,-0.25)}] node[left] {\footnotesize $q=49$};
\end{tikzpicture}

\end{multicols}
\begin{center}
\begin{tikzpicture}
\draw[blue,-] (0,0) -- (1,0);
\draw[shift={(1.1,0)}] node[right] {\footnotesize Product codes};
\draw[red,-] (4,0) -- (5,0);
\draw[shift={(5.1,0)}] node[right] {\footnotesize Ruled surfaces codes};
\draw[-] (-0.15,-0.3) -- (-0.15,.3) -- (8,.3) -- (8,-.3) -- (-0.15,-0.3);
\end{tikzpicture}
\end{center}
\end{center}

\section{Locality and avalaibility of the codes}

The last decade has seen an increasing interest on codes having local properties. In many situations, it is desirable to recover one coordinate of a vector in the code from a little number of other ones; this leads to the notions of \emph{locally decodable} or \emph{locally recoverable} codes and their study. Hirzebruch surfaces (or their images by the contraction of their negative curve) have already been used to construct such codes \cite{lana2,svav}.

We show below that the restrictions of the evaluation codes on ruled surfaces to certain sets of coordinates are well known: they are (subcodes of) Reed-Solomon codes, or algebraic geometric codes on the base curve. The properties of these last codes imply the local properties of the evaluation codes on ruled surfaces. We stress here on locally recoverable codes, and we give a very general result.

As usual, $\pi:X=\P(\E)\rightarrow C$ denotes a ruled surface, with $\E$ a locally free sheaf of rank $2$ over $C$. We fix a divisor $D\sim \O_X(a)+\pi^\ast \beta\in \Pic X$ with $\beta\in \Pic C$, $\deg \beta=b$. We denote by $\C$ the code obtained by evaluating the global sections of $D$ at the rational points of $X$.

We first describe the restriction of divisors on $X$ to certain curves.

\begin{lemma}
Let $\pi:X=\P(\E)\rightarrow C$ denote a ruled surface over $C$.

\begin{itemize}
\item[(1)] let $\imath : \P^1\rightarrow X$ denote a fiber of $\pi$; then the pull-back $\imath^\ast$ is the morphism from $\Pic X$ to $\Pic \P^1=\Z$ sending $\O_X(a)+\pi^\ast \beta$ to $\O_{\P^1}(a)$;
\item[(2)] let $s : C\rightarrow X$ denote a section of $\pi$ having image $S$ such that $S\sim \O_X(1)+\pi^\ast \gamma$; then the pull-back $s^\ast$ is the morphism from $\Pic X$ to $\Pic C$ sending $\O_X(a)+\pi^\ast \beta$ to $a(\det\E+\gamma)+\beta$.
\end{itemize}
\end{lemma}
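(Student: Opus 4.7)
The plan is to treat the two parts separately, using functoriality and the universal property of $\P(\E)$.

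For part (1), fix a point $p\in C$ whose fiber is $\imath:\P^1\hookrightarrow X$. The composition $\pi\circ\imath:\P^1\to C$ factors through the closed point $p$, so for any $\beta\in\Pic C$ the line bundle $\imath^*\pi^*\beta$ is the pullback of $\beta$ to a point, hence trivial. For the other summand, the fiber $\pi^{-1}(p)$ is canonically identified with the projective line $\P(\E_p)$, and by the very construction of the tautological sheaf on a projective bundle \cite[(II.7)]{hart}, the restriction of $\O_X(1)$ to this fiber is $\O_{\P^1}(1)$. Combining these two observations gives $\imath^*(\O_X(a)+\pi^*\beta)=\O_{\P^1}(a)$.

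For part (2), since $\pi\circ s=\mathrm{id}_C$, functoriality gives $s^*\pi^*\beta=\beta$. It remains to identify $s^*\O_X(1)$. By the universal property of the projective bundle \cite[(II.7.12)]{hart} (applied as in the proof of Corollary \ref{extension}), the section $s$ corresponds to a surjection $\E\twoheadrightarrow\L$ of $\O_C$-modules with $\L=s^*\O_X(1)$. On the other hand, by \cite[(V.2.9)]{hart}, already invoked in the proof of Lemma \ref{minself}, the image $S$ of such a section satisfies $S\sim\O_X(1)+\pi^*(\L-\det\E)$. Comparing with the given relation $S\sim\O_X(1)+\pi^*\gamma$ forces $\L=\det\E+\gamma$ in $\Pic C$. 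Consequently $s^*\O_X(a)=a\L=a(\det\E+\gamma)$, and adding back the contribution of $\beta$ yields the announced formula.

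There is no real obstacle here: the two computations are essentially a clean bookkeeping exercise combining the universal property of $\P(\E)$ with the standard divisor class formula for the image of a section. The only subtlety worth flagging is that one should not confuse the line bundle $s^*\O_X(S)$ (which has degree $S^2$ but is a priori not determined as an element of $\Pic C$ without extra information) with $s^*\O_X(1)$; the key point is that the latter is pinned down directly by the universal property, which then retroactively pins down the former through the formula $S\sim\O_X(1)+\pi^*\gamma$.
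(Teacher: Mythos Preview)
Your proof is correct and follows essentially the same route as the paper. For part (2) the arguments are virtually identical (you invoke \cite[(V.2.9)]{hart} directly, while the paper rederives it via \cite[(V.2.6)]{hart}); for part (1) the paper argues by computing the degree of the restriction via the intersection product, whereas you identify $\imath^*\O_X(1)$ from the construction of the tautological sheaf and kill $\imath^*\pi^*\beta$ by factoring through a point, but this is only a cosmetic difference.
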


\begin{proof}
The first assertion comes from the intersection product: the pull-back gives the restriction of the divisor $\O_X(a)+\pi^\ast \beta$ to a fiber of $\pi$. This restriction has degree $a$, and it must be $\O_{\P^1}(a)$.

We turn to the second assertion. We clearly have $s^\ast \pi^\ast \beta=\beta$ for any $\beta\in \Pic C$ since $s$ is a section of $\pi$. We compute the pull-back of $\O_X(1)$ by $s$.

We deduce from \cite[(II.7.12)]{hart} that we can associate to the morphism $s$ a surjection $\E\rightarrow \L$ where $\L=s^\ast\O_X(1)$ is an invertible sheaf. Let us write $\L=\O_C(\theta)$ for some divisor $\theta$ on $C$. On one hand, we know that the kernel of the above surjection is $\M\simeq\O_C(\det\E-\theta)$. On the other hand, we know from \cite[(V.2.6)]{hart} that $\pi^\ast \M\simeq \O_X(1)\otimes\O_X(-S)$. We get $S\sim \O_X(1)+\pi^\ast (\theta-\det\E)$, and finally $\theta\sim \gamma+\det\E$. This is the desired result.
\end{proof}

As a consequence, we deduce the

\begin{corollary}
\label{rest}
Let $\C$ denote the code obtained by evaluating the global sections of $D\sim \O_X(a)+\pi^\ast \beta$ at the rational points of $X$. Then

\begin{itemize}
\item[(1)] the restriction of $\C$ to the rational points on the fiber $\pi^{-1}(p)$ above $p\in C(k)$ is a subcode of the projective Reed-Solomon code of degree $a$, $\PRS(a)$. It is the whole projective Reed-Solomon code when we have $H^1(X,\O_X(D-\pi^\ast p))=0$.
\item[(2)] if $s : C\rightarrow X$ denotes a section of $\pi$ having image $S$ such that $S\sim \O_X(1)+\pi^\ast \gamma$; then the restriction of $\C$ to the rational points on the curve $S$ is a subcode of the code obtained by evaluating the global sections of $a(\det\E+\gamma)+\beta\in \Pic C$ at the points of $C(k)$. The restriction is surjective  when we have $H^1(X,\O_X(D-S))=0$.
\end{itemize}
\end{corollary}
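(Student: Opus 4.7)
The plan is to factor each restriction map through the pullback of line bundles along the relevant closed embedding, use the preceding lemma to identify what those pullbacks are, and then handle the surjectivity assertions via a short exact sequence of sheaves on $X$.

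More precisely, for any closed embedding $\imath:Y\hookrightarrow X$, the evaluation of the global sections of $\O_X(D)$ at the rational points of $Y$ factors through restriction as
\[
H^0(X,\O_X(D))\xrightarrow{\imath^\ast}H^0(Y,\imath^\ast\O_X(D))\xrightarrow{\mathrm{ev}_{Y(k)}}k^{\#Y(k)},
\]
so the restriction of $\C$ to the coordinates indexed by $Y(k)$ automatically lies in the evaluation code on $Y$ attached to $\imath^\ast\O_X(D)$. For (1), apply this with $Y=\pi^{-1}(p)\simeq\P^1_k$; the first part of the preceding lemma gives $\imath^\ast\O_X(D)=\O_{\P^1}(a)$, and evaluating $H^0(\P^1,\O_{\P^1}(a))$ at the $q+1$ points of $\P^1(k)$ is by definition $\PRS(a)$. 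For (2), apply it with $\imath=s$; the second part of the lemma gives $s^\ast\O_X(D)\simeq a(\det\E+\gamma)+\beta$ in $\Pic C$, so the restriction sits inside the AG code of this divisor on $C$.

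For the surjectivity assertions, I would use the short exact sequence on $X$
\[
0\to\O_X(D-Y)\to\O_X(D)\to\imath_\ast\imath^\ast\O_X(D)\to 0,
\]
with $Y=\pi^\ast p$ in (1) and $Y=S$ in (2). The identification of the ideal sheaf is the only mild subtlety: in (1) it is $\O_X(-\pi^\ast p)$ because $\pi$ is smooth so $\pi^\ast p$ equals the scheme-theoretic fiber, and in (2) it is $\O_X(-S)$ because $S$ is an effective Cartier divisor on the smooth surface $X$. The long cohomology sequence then shows that $\imath^\ast$ is surjective on global sections as soon as $H^1(X,\O_X(D-Y))=0$; composing with the already-surjective evaluation maps $H^0(\P^1,\O_{\P^1}(a))\twoheadrightarrow\PRS(a)$ in (1), respectively $H^0(C,a(\det\E+\gamma)+\beta)\twoheadrightarrow$ the AG code in (2), yields the claimed surjectivity. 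No real obstacle is expected: once the preceding lemma is in hand the argument is essentially bookkeeping with the restriction-evaluation diagram and a single long exact sequence.
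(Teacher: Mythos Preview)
Your proposal is correct and follows essentially the same route as the paper: factor the evaluation through the pullback along the closed embedding, invoke the preceding lemma to identify that pullback, and use the ideal-sheaf short exact sequence tensored by $\O_X(D)$ together with the long exact cohomology sequence to get surjectivity under the stated $H^1$ vanishing. If anything, you are slightly more explicit than the paper in spelling out the factorization $H^0(X,\O_X(D))\to H^0(Y,\imath^\ast\O_X(D))\to k^{\#Y(k)}$ and in identifying the ideal sheaves, whereas the paper writes the argument only for~(1) and declares~(2) ``entirely similar''.
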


\begin{proof}
We just show the first assertion from the first part of the preceding lemma; the proof of the second one is entirely similar.

Let us write the exact sequence that defines the subscheme $\pi^{-1}(p)$ of $X$ :
\[
0\rightarrow \O_X(-\pi^\ast p)\rightarrow \O_X\rightarrow \O_{\pi^{-1}(p)}\simeq\O_{\P^1}\rightarrow 0
\]
We tensor it by $\O_X(D)$; the right hand sheaf becomes $\O_{\P^1}(a)$ from the above Lemma. Now we consider the long exact sequence obtained by taking global sections, and we get a map from $H^0(X,\O_X(D))$, the set of global sections of $D$, to $H^0(\P^1,\O_{\P^1}(a))$ that we identify with the space of homogeneous polynomials of degree $a$ in two variables. Moreover, this map is surjective when the cohomology space $H^1(X,\O_X(D-\pi^\ast p))$ vanishes.
\end{proof}

Let us define the codes of interest to us in this section

\begin{definition}
A code $\C$ of length $n$ is \emph{locally recoverable} with \emph{locality} $r$ and \emph{availability} $s$ when for every $1\leq i\leq n$, there exist $s$ pairwise disjoint subsets $J_{i1},\ldots,J_{is}$ of $\{1,\ldots,n\}$, each of cardinality $r$ (the \emph{recovery sets}), such that for any codeword $c\in\C$, we can recover its coordinate $c_i$ from any of the sets $\{c_j,~j\in J_{ik}\}$, $1\leq k\leq s$. 
\end{definition}

It follows immediately from ``projective'' Lagrange interpolation that the projective Reed-Solomon code $\PRS(a)$ over $\F_q$ is locally recoverable with locality $a+1$ and avalaibility $\lceil\frac{q}{a+1}\rceil$. As usual we deduce a polynomial of degree at most $a$ from $a+1$ of its values, the only difference with classical Lagrange interpolation being that we also consider the values of such polynomials at infinity; this is just their degree $a$ coefficient.

From this result we deduce the following local properties of codes on ruled surfaces

\begin{proposition}
Let $D\sim \O_X(a)+\pi^\ast \beta\in \Pic X$ with $\beta\in \Pic C$ denote a divisor on a ruled surface $\pi:X=\P(\E)\rightarrow C$ defined over $\F_q$. We assume that for any $p\in C(\F_q)$ we have $H^1(X,\O_X(D-\pi^\ast p))=0$.

Then the code $\C$ obtained by evaluating the global sections of $D$ at the rational points of $X$ is locally recoverable with locality $a+1$ and avalaibility $\lceil\frac{q}{a+1}\rceil$.
\end{proposition}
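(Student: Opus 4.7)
The plan is to reduce the local recoverability of $\C$ on the surface to the already-noted local recoverability of the projective Reed-Solomon code $\PRS(a)$ on each fiber. The hypothesis on $H^1$ has been introduced precisely so that Corollary \ref{rest}(1) yields an equality, not merely an inclusion, when restricting.

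First I would fix an arbitrary coordinate of $\C$, corresponding to a rational point $x\in X(\F_q)$, and consider the fiber $F := \pi^{-1}(\pi(x))$, which is isomorphic to $\P^1_{\F_q}$ and contains exactly $q+1$ rational points (one of which is $x$). By Corollary \ref{rest}(1), together with the vanishing $H^1(X,\O_X(D-\pi^\ast \pi(x)))=0$ coming from the hypothesis applied to $p=\pi(x)\in C(\F_q)$, the restriction of $\C$ to the $q+1$ coordinates of $F(\F_q)$ is the full projective Reed-Solomon code $\PRS(a)$ of length $q+1$.

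Then I would invoke the local structure of $\PRS(a)$ recalled just before the statement: by projective Lagrange interpolation, any $a+1$ coordinates on $\P^1(\F_q)$ determine the corresponding codeword of $\PRS(a)$, so fixing one coordinate we can partition the remaining $q$ coordinates into $\lceil q/(a+1)\rceil$ pairwise disjoint subsets of size $a+1$ (truncating the last one if necessary, though it suffices to take any $\lceil q/(a+1)\rceil$ pairwise disjoint recovery sets of size $a+1$ extracted from the $q$ other coordinates). Pulling these recovery sets back to $\C$ via the identification of $F(\F_q)\setminus\{x\}$ with a subset of the coordinates of $\C$, we obtain $\lceil q/(a+1)\rceil$ pairwise disjoint recovery sets of cardinality $a+1$ that recover the coordinate $c_x$ of any codeword $c\in\C$.

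Since the argument is uniform in $x\in X(\F_q)$, this gives the desired locality $a+1$ and availability $\lceil q/(a+1)\rceil$. The only step that requires any care is ensuring that the restriction to $F(\F_q)$ is \emph{all} of $\PRS(a)$ and not a proper subcode — otherwise interpolation from $a+1$ coordinates on $\P^1$ could fail to land in the restricted code — and this is exactly guaranteed by the $H^1$-vanishing assumption via Corollary \ref{rest}(1); there is no further obstacle.
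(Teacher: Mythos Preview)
Your proof is correct and follows exactly the paper's intended argument: restrict to the fiber through $x$, use Corollary~\ref{rest}(1) together with the $H^1$-vanishing to identify the restriction with $\PRS(a)$, and then invoke the already-stated local recoverability of $\PRS(a)$.

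One small point: your final paragraph's justification for why surjectivity onto $\PRS(a)$ is needed is not quite right. Even if the restriction were only a proper subcode of $\PRS(a)$, every restricted codeword would still be the evaluation vector of a polynomial of degree at most $a$, so Lagrange interpolation from any $a+1$ coordinates would still recover the remaining coordinate correctly. There is no issue of ``landing in the restricted code'': you are recovering a coordinate of a fixed codeword, not producing a new codeword. Thus the $H^1$-hypothesis, while stated in the paper, is not actually used in the recovery argument itself; the proof goes through verbatim for subcodes.
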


\begin{remark}
Since $a$ is positive, the condition $H^1(X,\O_X(D-\pi^\ast p))=0$ can be reduced to a condition of vanishing of some cohomology spaces on the base curve (ie to some non speciality conditions) from \cite[(V.2.4)]{hart}.

For instance, if we have $\E=\O_C\oplus\O_C(\varepsilon)$, this condition is equivalent to the divisors $\beta+i\varepsilon-p$ being non special for any $0\leq i\leq a$ and any $p\in C(\F_q)$. This last condition is verified when the degree of $\beta$ is large enough.
\end{remark}

\begin{remark}
We have thus shown that under a mild assumption, our codes have ``fibral'' recovery sets. But we could also use the second part of Corollary \ref{rest} to prove that under certain conditions, the code has also ``sectional'' recovery sets, increasing its avalaibility. It is clear that an evaluation code on a ruled surface inherits many local properties from those of certains codes on the base curve. We leave it to the interested reader to choose his favorite curve as base curve, and deduce the corresponding local properties on the ``global'' codes.
\end{remark}

%{\bf Question :} faut-il être plus précis ici ? Par exemple en donnant une condition d'annulation d'un espace de cohomologie sur $C$ ? Ou encore, dans le cas décomposable, en donnant précisément le sous-code de Reed Solomon qui est atteint ? Cela permettrait de justifier l'étude de la seconde famille de codes, car jusqu'ici ils n'avaient pas grand chose à leur avantage. 
%
%{\bf Et ensuite :} on souhaite donner des conditions de localité. Il faut commencer par définir précisément ce qu'on attend des sous-ensembles de récupération/recouvrement. Sur les fibres cela paraît assez clair à partir des propriétés des codes de Reed Solomon. On doit même avoir une certaine disponibilité. Pour les codes de la première famille (sur les transformations élémentaires du produit) on a d'autres ensembles de recouvrement ``horizontaux'' puisque les points rationnels sont situés sur (en fait partitionnés entre) des sections bien particulières. Cela doit donner de nouveaux ensembles de recouvrement, et augmenter la disponibilité. Encore faut-il comprendre comment on récupère une coordonnée manquante dans un code d'évaluation sur une courbe.
%
%Et si on fabrique des codes sur des surfaces définies à partir de courbes dont on sait qu'elles-mêmes donnent des codes ayant de bonnes propriétés de localité ????

\bibliographystyle{smfplain}

\bibliography{DoubleCover}

\end{document}